\newtheorem{lemma}{Lemma}
\newtheorem{theorem}{Theorem}
\newtheorem{proposition}{Proposition}
\newtheorem{definition}{Definition}
\newtheorem{assumption}{Assumption}
\def\ta{\theta}
\def\la{\lambda}
\def\phi{\varphi}
 \def\argmin{\mbox{argmin}}
\def\ul{\underline}
\def\bar{\overline}
\newcommand{\df}[1]{\textit{#1}}
\begin{document}
\title{Rationally Inattentive Statistical Discrimination: Arrow Meets Phelps\thanks{We thank the editor, four anonymous referees, Ilya Segal, and the audience at various conferences and seminars for their comments and suggestions.}}
\author{
Federico Echenique\footnote{Department of Economics, University of California, Berkeley, \href{mailto:fede@econ.berkeley.edu}{fede@econ.berkeley.edu}. }
\and Anqi Li\footnote{Department of Economics,  University of Waterloo, \href{mailto:angellianqi@gmail.com}{angellianqi@gmail.com}.}
}
\date{}

\maketitle
\thispagestyle{empty}
\begin{abstract}
When information acquisition is costly but flexible, a principal may rationally acquire information that favors one group over another. The former group faces incentives to invest in becoming productive, while the latter is discouraged from such investments.  The principal, in turn, ignores the productivity difference between groups unless the underinvested group surprises him with a genuinely outstanding outcome. We give conditions under which the discriminatory equilibrium is most preferred by the principal, despite all groups being ex-ante identical. Our results inform the discussion of affirmative action, implicit bias, and occupational segregation and stereotypes.

\vspace{.8cm}

\noindent \textbf{Keywords:} Statistical discrimination; rational inattention; incentive contracting
\vspace{.3cm}

\noindent \textbf{JEL codes:} D82, D86, D31,  J71
\end{abstract}

\newpage
\setcounter{page}{1}

\section{Introduction}\label{sec_intro}
We provide a new account of statistical discrimination. A demographic group is discriminated against in the labor market because its members rationally choose to underinvest in the skills needed to succeed. Their investment choice is reinforced by the endogenous allocation of an employer's  limited attention across groups, based on which workers estimate the returns to investing, and labor market decisions are made.  In equilibrium, discriminatory attention allocation and differing investment choices between ex-ante identical groups are mutually reinforcing.  Under some conditions, discriminatory equilibria are the most profitable to the employer.

The theory of statistical discrimination posits that groups with certain demographic traits are discriminated against in the labor market, because rational employers correctly infer that these groups should be treated differently. As an explanation for discrimination, the theory does not rely on bias or animosity,  but rather on the mechanism by which employers form discriminatory beliefs.

Economists have proposed two canonical models of statistical discrimination: the Arrovian model of coordination failure, and the Phelpsian model of information heterogeneity. \cite{arrow71,arrowJEP} argues that discrimination may arise as the result of coordination failure. One demographic collective, call it Group 1, expects to be discriminated against, and therefore does not undertake the costly investments that are needed to succeed in the labor market. Group 2 expects to be
favored and therefore invests. Employers, in turn, rationally favor Group 2 over Group 1 because the former is expected to invest and the latter is not. Such a discriminatory equilibrium is, typically, Pareto dominated by an impartial equilibrium where employers hold uniformly positive beliefs about all groups, and the latter all invest.

The second canonical model follows \cite{phelps72} (see also
\citealt{aigner1977statistical}) to argue that statistical discrimination emerges from
differing qualities of information. Groups 1 and 2 have the same, exogenous, skill distribution, but 
employers have access to better-quality information about members of Group 2
than of Group 1. As a result, members of Group 2 enjoy, on average, a
favorable treatment in the labor market.  Beyond a few prominent factors, such as shared language, culture backgrounds and social connections \citep{cornell1996culture}, the reasons behind the informational heterogeneity are often left unspecified. 

We combine the Arrovian and Phelpsian ideas, with the chief aim of endogenizing employers' information acquisition about their employees. In our model,  workers choose whether to undertake a costly investment that increases their likelihood of being productive. An employer selects a worker to promote,  based on his endogenously gleaned information about workers' productivity. Following the literature on rational inattention \citep{sims2003implications}, we model information acquisition as a costly signal structure that prescribes promotion recommendations to the employer.  In equilibrium, workers' incentives to invest are influenced by their likelihood of getting promoted, and their investment decisions in turn shape the signal structure that the employer uses to assess their productivity.  

We first show that there always exists an impartial equilibrium: analogous to the Pareto-dominant equilibrium in Arrow's model, but with the new feature that the information structure endogenously chosen by the employer is also impartial about groups. In an impartial equilibrium, there is neither Arrovian coordination failure nor Phelpsian information heterogeneity.

Our main results describe a discriminatory equilibrium, in which some groups choose not to invest because they are not expected to, while others do invest and correctly expect to be rewarded.  Crucially, these differing investment
decisions are mirrored in the employer's choice of a discriminatory signal 
structure --- one that  favors the group who invests, unless the
underinvested group is strictly more productive than the former; {\em Arrow meets Phelps.} In this way, the employer can efficiently deploy his limited attentional resources according to workers' investment decisions, focusing mainly on whether the underinvested group surprises him with a genuinely outstanding outcome. This differential treatment reinforces workers' expectations and a vicious circle is closed.

The following diagram plots the model's behavior against an attention cost parameter that captures how costly it is for the employer to acquire information:
\begin{center}
\begin{tikzpicture}

\draw[->] (0,0) -- (8,0) node[below,font=\scriptsize] {Attention cost};

\coordinate (ull) at (2,0);
\coordinate (lastar) at (4,0);
\coordinate (barl) at (6,0);

\draw[line width=2pt] (ull) -- (2,-0.04);
\draw[line width=2pt] (lastar) -- (4,0.04);
\draw[line width=2pt] (barl) -- (6,-0.04);

\draw[fill=red] (4,-0.04) rectangle (6,0.04);

\draw [thick, decorate,decoration={brace,amplitude=5pt}] (0,0.08)--
(3.95,0.08)
node [black,midway,above=6pt, font=\scriptsize,text width=3.5cm] {Unique
  impartial eq., high worker investment};

\draw [thick, decorate,decoration={brace,amplitude=5pt}] (4.05,0.09)--
(8,0.09) node [black,midway,above=6pt, font=\scriptsize,text width=3cm] {Unique
  impartial eq., low worker investment};

\draw [thick, decorate,decoration={brace,amplitude=5pt}] (6,-0.07) --
 (2,-0.07)
node [black,midway,below=6pt, font=\scriptsize,text width=3cm]
{Discriminatory eq.};

\draw[thin,->] (5,-0.05) -- (6,-1) node[right,font=\scriptsize,text width=3cm,red] {Employer prefers
  discriminatory eq.};
\end{tikzpicture}
  \vspace{-5pt}
\end{center}
Observe that an impartial equilibrium always exists and is generically unique. It features high worker investments when the attention cost parameter is low, and low investments when the attention cost is high. A discriminatory equilibrium emerges when the attention cost parameter is intermediate, and it is the most profitable equilibrium to the employer when it coexists with an impartial equilibrium that induces low worker investments. 

\label{paragraph_underyingmechansim}Our main results convey three basic messages. First, a discriminatory information structure can emerge endogenously, and be mirrored in workers' differential investment decisions, even though workers are ex-ante identical. The mechanism generating discrimination relies on a complementarity between attention asymmetry and investment asymmetry. 

Second, a discriminatory equilibrium may be strictly preferred by the employer to an impartial equilibrium. When attention is costly, impartiality implies uniformly low incentives to all workers. The employer prefers a discriminatory world in which one group is properly incentivized, while the other is only recognized when they are strictly more productive than the first. This allows the employer to be rationally inattentive and to save on attention cost, in addition to boosting revenue. Such welfare comparisons stand in contrast to Arrow's explanation of statistical discrimination  as a Pareto-dominated, ``bad,'' equilibrium to all parties involved. 

Third, the degree of discrimination in the most profitable equilibrium is nonmonotonic in the attention cost parameter. Our model has no discriminatory equilibrium when the attention cost parameter is (close to) zero or very high. Discrimination arises, and constitutes the most profitable equilibrium to the employer, when the attention cost parameter is intermediate. Consequently, the effects of lowering the attention cost parameter on the equilibrium degree of discrimination and players' welfare are in general ambiguous. Our comparative statics result speaks to the de-biasing programs used by real-world organizations to address discrimination. These programs are based on the conventional wisdom in social psychology that limited attention triggers implicit biases \citep{greenwald1995implicit, macrae2000social}, but their success has been mixed \citep{greenwald2020implicit}, which seems consistent with out findings. See Section~\ref{sec:implications} for a detailed discussion.

Our model not only adds to the theory of statistical discrimination; it also provides a tractable framework to discuss various policy issues, as well as phenomena associated with labor market discrimination. In Section~\ref{sec_aa}, we use our model to evaluate the effectiveness of affirmative action quotas in addressing discrimination. 
We show that mandating a quota that requires members of different groups be promoted  with equal probability eliminates discriminatory equilibria, without impacting on impartial equilibria. Furthermore, it does not generate new, patronizing equilibria as a byproduct --- a contrast with the previous literature \citep{coateloury1993}.
A second application to occupational discrimination can be found in the Online Appendix, where the employer assigns workers to perform multiple, ex-ante symmetric tasks using stereotypical screening. This is mirrored in workers' differential investments in task-specific skills,  which, in the most profitable equilibrium to the employer, gives rise to occupational segregation and stereotypes.

\section{Model}\label{sec_model}
We study a game between three players: a principal, and two agents who are called Michael
($m$) and Wendy ($w$). The principal must choose one of the agents to promote. The promotion decision serves to induce the agents to exert effort so as to be more productive. It delivers a unit benefit to the chosen agent, as well as the agent's productivity to the principal. One can broadly interpret the promotion opportunity as a reward (e.g., salary raise, employee recognition,  favorable task assignment) that motivates agents to undertake costly investments.  For the sake of concreteness, we shall stick to the interpretation of promotion throughout.

Specifically, each agent $i\in\{m,w \}$ chooses a level of \df{effort} $\mu_i\in
\{\ul\mu,\bar\mu \}$, with $0<\ul\mu<\bar\mu<1$, at a cost $C(\mu_i)$. Suppose that $C(\ul\mu)=0$ and that $C(\bar\mu)=C \in (0,1/2)$. The effort $\mu_i$ generates a random \emph{productivity} $\tilde{\ta}_i$ for agent $i$, with $\mu_i$ being the probability that $\tilde{\ta}_i=1$ and $1-\mu_i$ the probability that $\tilde{\ta}_i=0$. Given the profile $\bm{\mu}\coloneqq (\mu_m,\mu_w)$, productivities are drawn independently across the agents. 

The principal does not observe the realizations of $\tilde{\ta}_m$ and $\tilde{\ta}_w$, but can acquire information about them. Information, however, is costly. Given the information that the principal gleans about $\tilde{\bm{\ta}}\coloneqq (\tilde{\ta}_m,\tilde{\ta}_w)$, he chooses whom to promote. Specifically, the principal selects $a\in\{0,1\}$, where $a=0$ means that Wendy is promoted, and $a=1$ means that Michael is promoted.

Information acquisition is modeled as the choice of a signal structure $\pi:\{0,1\}^2\to\Delta(S)$, which maps each profile of productivity values to a random signal taking values in a set $S$. We assume that $S$ is finite
and that $|S| \geq 2$; later we shall demonstrate that these assumptions about $S$ are without loss of generality. Otherwise we impose no restriction on the signal structure, in order to model attentional flexibility and to study its impact on statistical discrimination (as suggested by the supporting evidence reviewed in Section \ref{sec_literature}). A promotion rule is a function $a:S\to \Delta(\{0,1\})$, which maps each signal realization to a (random) decision on whether to promote Michael or Wendy.  The profile $(\pi,  a(\cdot))$ of signal structure and promotion rule fully captures the principal's strategy. 

Given a profile $\bm{\mu}$ of effort choices by the agents, the principal's expected payoff is \[ 
\mathbb{E} \left[\tilde a\tilde \ta_m + (1-\tilde a)\tilde \ta_w \mid  \bm{\mu},\pi,a(\cdot) \right] - \la  I(\pi\mid \bm{\mu}),
\] where $\la>0$ parameterizes the cost of information acquisition, and is hereinafter referred to as the \emph{attention cost parameter};  $I$ is the mutual information (or reduction in Shannon entropy) between the random productivity profile $\tilde{\bm{\ta}}$ and the random signal generated by $\pi$. In words, the principal's payoff equals the productivity of the promoted agent,
which is estimated according to the information generated by the signal structure of his choice. As the latter becomes more informative of agents' productivity, the cost of information acquisition increases. 

The game begins with the principal and agents moving simultaneously: the former
chooses a signal structure and a promotion rule, and the latter make effort choices. After agents have made their choices, productivity and signals are realized. Then the principal's promotion decision is implemented. When choosing an agent to promote, the principal observes neither agents' efforts, or productivity, thus facing a moral hazard problem.  Agents do not observe the principal's choice of the signal structure or promotion rule --- an assumption that reflects the subjective nature of employee evaluation and promotion in practice. A variation of the game sequence, with the principal first committing to a signal structure, is explored in Online Appendix O.2.

We examine Bayes Nash equilibria in which agents adopt pure strategies (hereinafter, \emph{equilibrium} for short). When multiple equilibria coexist, we characterize them all, with a particular focus on the \emph{most profitable equilibrium to the principal}. Our equilibrium selection mechanism is standard in the contract theory literature and best captures situations in which the principal holds the bargaining power to influence the selection of equilibrium. See, however, Section \ref{sec:implications} for agents' preferences over multiple equilibria, and Online Appendix O.4 for equilibria in mixed strategies.

\section{Results}\label{sec_overview}
To proceed with our main results, we first present some preliminary concepts, followed by statements of the results and their intuitions, and finally their formal analysis. 
\vspace{-10pt}
\subsection{Preliminaries} We first simplify the principal's strategy in a manner that is now standard in the rational inattention literature; a seminal reference is \cite{matvejka2015rational}.  Define $\Delta \theta
\coloneqq \theta_m-\theta_w$ as the differential productivity value between $m$ and $w$, and note that $\Delta \theta \in \{-1,0,1\}$.  For any
given effort profile $\bm{\mu}$, rewrite the principal's expected payoff  as 
\begin{equation}\label{eqn_mainproblem}
\underbrace{\mathbb{E} \left[\tilde{a} \Delta \tilde{\theta} \mid  \bm{\mu},\pi,a(\cdot) \right]+\mu_w}_\text{Expected revenue} - \la I(\pi \mid \bm{\mu}), 
\end{equation}
where $\mu_w$ is $w$'s expected productivity,  and $\tilde{a} \Delta\tilde{\theta}$ is the (random) change in the principal's revenue by promoting $m$ rather than $w$. Crucially, the expected revenue depends on the principal's strategy $(\pi, a(\cdot))$ only through $\Delta \tilde{\theta}$.  Therefore,  we may restrict attention to signal structures that prescribe a (random) \emph{promotion recommendation} to the principal based on the differential productivity value between $m$ and $w$, i.e., $\pi: \{-1,0,1\} \rightarrow \Delta(\{m,w\})$, as any information beyond the differential productivity is redundant, and therefore shouldn't be acquired.  Furthermore, any optimal signal structure, if nondegenerate, must prescribe promotion recommendations that the principal strictly prefers to obey, i.e., $a(m)=1$ and $a(w)=0$.\footnote{The reason is that we can always label promotion recommendations in such a way that the principal weakly prefers to obey them. When an optimal signal structure is nondegenerate but violates strict obedience, the principal must have a (weakly) preferred candidate regardless of the promotion recommendations he receives. Consequently, he can always promote that agent without acquiring information to save on attention cost, which contradicts the assumption that the optimal signal structure is nondegenerate. \label{fn_strictobedience}} Consequently, we can represent the principal's strategy by $\pi: \{-1,0,1\} \rightarrow [0,1]$, where each $\pi(\Delta\theta)$, $\Delta \theta \in \{-1,0,1\}$, specifies the probability that $m$ is recommended for promotion when the differential productivity value between $m$ and $w$ equals $\Delta \theta$.

Using $\pi$ we can talk formally about the presence or absence of discrimination.

\begin{definition}
A signal structure $\pi$ is \emph{impartial} if the probability of promoting an agent
depends \emph{only} on his or her productivity difference with the other agent, and
\emph{not} on agents' identities. That is, $\pi(\Delta \theta)=1-\pi(-\Delta \theta)$ $\forall \Delta \theta \in \{-1,0,1\}$. Otherwise $\pi$ is  \emph{discriminatory}.
\end{definition} 

\begin{definition}
An equilibrium is \emph{impartial} (resp.\ \emph{discriminatory}) if the equilibrium signal structure is impartial (resp.\ discriminatory).
\end{definition}

We will show that an impartial equilibrium must induce the same level of effort from both agents, while a discriminatory equilibrium must induce different levels of effort from the two agents. By symmetry, it is without loss of generality to focus on discriminatory equilibria that induce high effort from $m$ and low effort from $w$ --- a convention we will follow in the remainder of the paper.

Lastly we introduce a regularity condition.

\begin{assumption}\label{assm_regularity}
(i) $\bar\mu+\ul\mu>1$ and (ii) $\frac{C}{\bar\mu-\ul\mu}<\frac{\bar\mu(1-\bar\mu)}{\bar\mu+\ul\mu-2\bar\mu \ul\mu}$.
\end{assumption}
Assumption \ref{assm_regularity} stipulates that when an agent exerts different levels of effort, the probabilities of experiencing a high productivity shock must be sufficiently high, and the associated cost of exerting high effort must be sufficiently low. The roles played by Assumption  \ref{assm_regularity} in our analysis are discussed in Section~\ref{sec:implications}.\footnote{Part (ii) of Assumption \ref{assm_regularity} is stronger than $C<1/2$, a condition that is needed for sustaining the high effort profile in any equilibrium and is maintained throughout the paper to make the analysis interesting. When information acquisition is costless, an agent earns an expected payoff of $1/2-C$ under the high effort profile and can always secure a nonnegative payoff by exerting low effort.  Shirking is deterrable only if $C<1/2$.} 

\subsection{Main results}\label{sec:mainresults} 
Our main results are twofold. The first
concerns the existence and uniqueness of impartial and discriminatory equilibria.  The second pinpoints the most profitable equilibrium to the principal.

\begin{theorem}\label{thm_existenceuniqueness}
For any $C$, $\bar\mu$, and $\ul\mu$ that satisfy Assumption~\ref{assm_regularity}, there exist values $\ul\la, \bar\la$, and $\la^*$ of the
attention cost parameter such that $0<\ul\lambda<\bar\lambda<+\infty$ and
$0<\la^*<+\infty$, and the following statements are true:
\begin{enumerate}[(i)]
\item An impartial equilibrium always exists.  For all 
  $\lambda \neq \lambda^*$, the impartial equilibrium  is unique; it sustains the high effort profile $(\bar{\mu}, \bar{\mu})$ if the attention cost parameter is
  low,  i.e., $\lambda<\lambda^*$, and the low effort profile $(\ul{\mu}, \ul{\mu})$ if the attention cost parameter is high,  i.e., $\lambda>\lambda^*$. 
\item A discriminatory equilibrium exists if and only if the attention cost parameter
  is intermediate, i.e., $\lambda \in [\ul\lambda, \bar\lambda]$.  Whenever a discriminatory equilibrium exists,  there is a unique discriminatory equilibrium that sustains $(\bar\mu, \ul\mu)$.
\item $\ul\lambda<\lambda^*$ always holds.  $\lambda^*<\bar\lambda$ holds if, in addition, $\ul\mu>1/2$ and Condition (\ref{eqn_regularity}) in Appendix \ref{sec_proof} is satisfied.\footnote{The role of Condition (\ref{eqn_regularity}) is discussed in Section \ref{sec:implications}. Numerical analysis confirms it can hold simultaneously with Assumption \ref{assm_regularity}. }
\end{enumerate}
\end{theorem}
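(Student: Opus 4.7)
The plan is to reduce each fixed-effort subgame to a binary-action RI problem and apply the closed-form characterization of \cite{matvejka2015rational}. Given an effort profile $\bm\mu$, let $P_k=\Pr(\Delta\tilde\ta=k\mid\bm\mu)$ for $k\in\{-1,0,1\}$. By strict obedience the principal's strategy collapses to three numbers $\pi(k)\in[0,1]$, and the interior RI optimum takes the logit form
\[ \pi(k) \;=\; \frac{q\,e^{k/\la}}{q\,e^{k/\la}+(1-q)}, \]
where $q=\Pr(\tilde a=1)$ solves the scalar consistency equation $q=\sum_k P_k\pi(k)$; corner cases ($\pi(k)\in\{0,1\}$) are handled by the standard Matějka--McKay obedience conditions. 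A symmetric-effort profile forces the principal's best response to be impartial, whereas an asymmetric-effort profile makes it strictly discriminatory, so equilibrium signal structures are pinned down by the induced effort profile.

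\textbf{Impartial equilibrium and $\la^*$.} For any $(\mu,\mu)$, symmetry forces $P_1=P_{-1}$, hence $q=1/2$ solves the consistency equation and the optimal signal structure has $\pi(0)=1/2$ and $\pi(1)=1-\pi(-1)=\al(\la):=(1+e^{-1/\la})^{-1}$. Crucially $\al(\la)$ is \emph{independent of $\mu$}, so the same impartial signal solves the principal's problem against both $(\bar\mu,\bar\mu)$ and $(\ul\mu,\ul\mu)$. A direct calculation yields the clean identity
\[ q_i^H - q_i^L \;=\; \al(\la) - \tfrac{1}{2} \qquad\text{for each agent $i$,} \]
independent of $\mu$. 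Hence high effort is IC iff $\al(\la)-1/2\ge c$ and low effort is IC iff $\al(\la)-1/2\le c$. Since $\al$ is strictly decreasing in $\la$, the unique $\la^*$ solving $\al(\la^*)=1/2+c$ separates the two regimes, proving part (i).

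\textbf{Discriminatory equilibrium and $[\ul\la,\bar\la]$.} For $(\bar\mu,\ul\mu)$, Assumption~\ref{assm_regularity} gives $A>B$, so the consistency equation delivers $q>1/2$ and the RI optimum is genuinely discriminatory with $\pi(1)>\pi(0)>\pi(-1)$ on the relevant range of $\la$. I would derive closed-form expressions for $q_m^H-q_m^L$ and $q_w^H-q_w^L$ by substituting the logit formula, and write the two ICs as (a) $q_m^H-q_m^L\ge c$ (so $m$ invests) and (b) $q_w^H-q_w^L\le c$ (so $w$ does not). As $\la\to 0$ the signal approaches perfect informativeness and, since the principal breaks $\Delta\ta=0$ ties in $m$'s favor (so $\pi(0)\to A/(A+B)$), $w$'s gap tends to $\bar\mu(1-\bar\mu)/(A+B)$, which strictly exceeds $c$ by the second part of Assumption~\ref{assm_regularity}; hence (b) fails near zero. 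As $\la\to\infty$ the signal becomes uninformative and (a) fails. Monotonicity of both IC slacks in $\la$, obtained by differentiating the logit formulas and applying the implicit function theorem to $q$, then yields a single connected existence interval $[\ul\la,\bar\la]$; uniqueness of the principal's best response given the effort profile delivers uniqueness of the discriminatory equilibrium, giving part (ii).

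\textbf{Comparisons and main obstacle (iii).} Both comparisons reduce to evaluating the asymmetric IC slacks at $\la=\la^*$. For $\ul\la<\la^*$, the goal is to show that $q_w^H-q_w^L<c$ at $\la=\la^*$: the principal's optimal response to $(\bar\mu,\ul\mu)$ reallocates attention away from $w$ relative to the symmetric signal (which would deliver $w$ exactly $c$), so (b) holds strictly at $\la^*$ and monotonicity pushes $\ul\la$ strictly below. The tight direction $\la^*<\bar\la$ requires that $q_m^H-q_m^L>c$ at $\la=\la^*$, i.e., the asymmetric reallocation in $m$'s favor is strong enough to exceed the symmetric slack of $c$; I would reduce this algebraically to the condition $\ul\mu>1/2$ together with the technical inequality (\ref{eqn_regularity}). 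The main obstacle throughout is the asymmetric analysis: $q$ is defined by a transcendental fixed-point, so monotonicity of the slacks and the sharp threshold comparisons rely on careful manipulation of the logit formulas together with the consistency condition, rather than the clean symmetry exploited in part (i).
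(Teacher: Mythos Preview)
Your approach is essentially the paper's: reduce to the Mat\v{e}jka--McKay logit, read off the IC slacks, and use monotonicity. Part (i) is correct and matches the paper exactly (your $\alpha(\lambda)-1/2$ is the paper's $g(\gamma)$). Two points, however, separate your sketch from a complete proof.

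\medskip
\textbf{Part (ii): you miss the key simplification.} The paper does not attack $q$ via the implicit function theorem. It solves the consistency equation in closed form and observes that the asymmetric optimum satisfies $Y=(A/B)\,X$ with the ratio \emph{independent of $\lambda$}. Plugging this into the two ICs makes both slacks linear in the single scalar $X=f(\gamma)$: $m$'s slack is $\ul\mu(1-\ul\mu)X/B$ and $w$'s is $\bar\mu(1-\bar\mu)X/B$. The two ICs then collapse to $X\in[\ul X,\bar X]$ with $\ul X=c(1-\bar\mu)/(1-\ul\mu)$ and $\bar X=c\ul\mu/\bar\mu$, and inverting the monotone $f$ gives $[\ul\lambda,\bar\lambda]$ directly. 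Your route would ultimately arrive here, but without the $Y/X=A/B$ observation the differentiation is much messier, and you never explicitly check that the interval is nonempty (this is where $\bar\mu+\ul\mu>1$ enters, via $\bar\mu(1-\bar\mu)<\ul\mu(1-\ul\mu)$, hence $\ul X<\bar X$).

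\medskip
\textbf{Part (iii): the heuristic is not a proof.} Your reduction is correct---showing $\ul\lambda<\lambda^*$ is equivalent to showing $w$'s asymmetric slack at $\lambda^*$ is below $c$---but ``reallocates attention away from $w$'' does not establish this. The symmetric and asymmetric signals solve \emph{different} RI problems (different priors), so there is no optimality-based comparison. The paper writes $w$'s asymmetric slack as $\varphi(\gamma):=f(\gamma)\bar\mu/\ul\mu$ and proves $\varphi(\gamma)<g(\gamma)$ for all relevant $\gamma$ by showing $\varphi/g$ is strictly increasing with limit $\bar\mu(1-\bar\mu)/(A+B)\cdot 2<1$. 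The analogous comparison for $\lambda^*<\bar\lambda$ uses $\psi(\gamma):=f(\gamma)(1-\ul\mu)/(1-\bar\mu)$ versus $g$; here $\psi/g$ has limit $2\ul\mu(1-\ul\mu)/(A+B)$, which exceeds $1$ iff $\ul\mu>1/2$, and this together with the crossing location yields Condition~(\ref{eqn_regularity}). Your proposal correctly identifies what needs to be shown but stops short of the algebra that actually delivers it.
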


\begin{theorem}\label{thm_mostprofitable}
Let everything be as in Theorem  \ref{thm_existenceuniqueness}, and suppose that $\lambda^*<\bar\lambda$. Then the most profitable equilibrium to the principal is discriminatory if and only if $\lambda \in (\lambda^*, \bar\lambda]$. 
\end{theorem}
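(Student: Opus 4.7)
Let $V(\bm{\mu};\lambda)\coloneqq \max_{\pi}\mathbb{E}[\tilde{a}\Delta\tilde{\theta}\mid\bm{\mu},\pi]-\lambda I(\pi\mid\bm{\mu})$ denote the principal's maximized change-in-revenue at effort profile $\bm{\mu}$, so that his total equilibrium payoff equals $V(\bm{\mu};\lambda)+\mu_{w}$.  Write $U^{H}$, $U^{L}$, $U^{D}$ for the payoffs in the impartial-high, impartial-low, and discriminatory equilibria respectively, with associated priors $p^{ih}$, $p^{il}$, $p^{disc}$ on $\Delta\tilde{\theta}$.  By Theorem~\ref{thm_existenceuniqueness}, the unique impartial equilibrium has effort $(\bar{\mu},\bar{\mu})$ on $\lambda<\lambda^{*}$ and $(\ul{\mu},\ul{\mu})$ on $\lambda>\lambda^{*}$, whereas the discriminatory equilibrium (effort $(\bar{\mu},\ul{\mu})$) exists exactly on $[\ul{\lambda},\bar{\lambda}]$; under the hypothesis $\lambda^{*}<\bar{\lambda}$ and the always-valid $\ul{\lambda}<\lambda^{*}$, Theorem~\ref{thm_mostprofitable} reduces to the two pairwise comparisons
\begin{enumerate}[(a)]
\item $U^{D}(\lambda)>U^{L}(\lambda)$ for $\lambda\in(\lambda^{*},\bar{\lambda}]$, and
\item $U^{H}(\lambda)>U^{D}(\lambda)$ for $\lambda\in[\ul{\lambda},\lambda^{*}]$.
\end{enumerate}

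The analytical workhorse is the standard Mat\v{e}jka--McKay dual for the RI binary-action problem,
\[V(\bm{\mu};\lambda)=\max_{q\in[0,1]}\lambda\sum_{s\in\{-1,0,1\}}p_{s}(\bm{\mu})\log Z_{s}(q,\lambda),\qquad Z_{s}(q,\lambda)\coloneqq (1-q)+qe^{s/\lambda},\]
where $p_{s}(\bm{\mu})$ is the marginal of $\Delta\tilde{\theta}$.  The $s=0$ term vanishes, and symmetry forces $q=1/2$ to be optimal under the symmetric priors $p^{il}$ and $p^{ih}$, yielding the closed forms $V(p^{il};\lambda)=2\lambda u\log\cosh(1/(2\lambda))$ with $u=\ul{\mu}(1-\ul{\mu})$, and $V(p^{ih};\lambda)=2\lambda u_{h}\log\cosh(1/(2\lambda))$ with $u_{h}=\bar{\mu}(1-\bar{\mu})$.

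For (a), I would evaluate the dual for $V(p^{disc};\lambda)$ at the \emph{suboptimal} choice $q=1/2$ and subtract $V(p^{il};\lambda)$.  Using $A-u=(1-\ul{\mu})\Delta\mu$ and $B-u=-\ul{\mu}\Delta\mu$, the difference simplifies to
\[V(p^{disc};\lambda)-V(p^{il};\lambda)\ \ge\ \lambda\Delta\mu\bigl[(1-\ul{\mu})\log Z_{1}(1/2)-\ul{\mu}\log Z_{-1}(1/2)\bigr]>0,\]
since $Z_{1}(1/2)>1>Z_{-1}(1/2)$ makes both bracketed terms strictly positive.  Because $U^{D}-U^{L}=V(p^{disc};\lambda)-V(p^{il};\lambda)$, this proves (a) (in fact for every $\lambda>0$, a fortiori on $(\lambda^{*},\bar{\lambda}]$).

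For (b), which rewrites as $V(p^{disc};\lambda)-V(p^{ih};\lambda)<\Delta\mu$, I expect the main difficulty.  The inequality is slack at $\lambda=0$ (the left side equals $\bar{\mu}\Delta\mu$) but becomes asymptotically tight as $\lambda\to\infty$, so no uniform slack is available and the corresponding revealed-preference trick from (a) no longer suffices.  My plan is (i) to verify the inequality \emph{at} $\lambda=\lambda^{*}$ using the binding-IC characterization $\alpha^{*}(\lambda^{*})=1/2+c$ established in the proof of Theorem~\ref{thm_existenceuniqueness}, which renders $U^{H}(\lambda^{*})$ fully explicit and comparable to $U^{D}(\lambda^{*})$ computed from the asymmetric first-order condition in the dual; and (ii) to extend to $[\ul{\lambda},\lambda^{*}]$ via monotonicity: by the envelope theorem, $d(U^{H}-U^{D})/d\lambda=I^{D}(\lambda)-I^{ih}(\lambda)$, and I would argue $I^{D}<I^{ih}$ on this interval on the grounds that incentivizing high effort from \emph{both} symmetrically placed agents demands a strictly more informative signal than incentivizing just one.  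This monotonicity claim is the main obstacle; a clean verification likely requires the closed form for $V(p^{ih};\lambda)$ above together with the explicit first-order condition for $q^{*}(\lambda)$ at the asymmetric prior, and may invoke Assumption~\ref{assm_regularity} to control the shape of $V(p^{disc};\lambda)$ near the corner solution ``always promote $m$.''
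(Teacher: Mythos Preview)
Your part~(a) is correct and in fact cleaner than the paper's Step~2.  By evaluating the dual of $V(p^{disc};\lambda)$ at the suboptimal point $q=1/2$ you obtain the lower bound directly, without computing closed forms for revenue and mutual information separately, and without invoking $\ul\mu>1/2$ (which the paper uses to sign $\Delta I$ in its Step~2).  This revealed-preference argument is a genuine simplification.

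Part~(b), however, has a real gap.  Your step~(ii) rests on the claim $I^{D}(\lambda)<I^{ih}(\lambda)$ on $[\ul\lambda,\lambda^{*}]$, and the heuristic you offer (``incentivizing both agents requires a more informative signal'') is not the right object: $I^{ih}$ and $I^{D}$ are the mutual informations of the principal's \emph{unconstrained} optimal signal at the respective priors, not the minimal information needed to satisfy IC.  In fact, the paper's own computation shows that $\Delta I(\gamma)\coloneqq I^{ih}-I^{D}$ is strictly decreasing in $\gamma$ with $\Delta I(\breve\gamma)>0$, and explicitly allows the possibility that it crosses zero at some $\tilde\gamma>\breve\gamma$; so $I^{D}<I^{ih}$ need not hold on all of $[\gamma^{*},\bar\gamma]$.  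Consequently your envelope-monotonicity route does not go through as stated.

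The paper circumvents this by exploiting a cancellation that your plan does not anticipate.  Using the closed forms in its Lemma~4, one has
\[
\frac{d\Delta V}{d\gamma}=\frac{\Delta\mu(1-2\bar\mu)}{(\gamma+1)^{2}}
\quad\text{and}\quad
\frac{d\Delta I}{d\gamma}=\frac{\Delta\mu(1-2\bar\mu)\ln\gamma}{(\gamma+1)^{2}},
\]
so that in the derivative of $\Delta R(\gamma)\coloneqq \Delta V(\gamma)-\Delta I(\gamma)/\ln\gamma$ the first two terms cancel exactly, leaving $d\Delta R/d\gamma=\Delta I(\gamma)/[\gamma(\ln\gamma)^{2}]$.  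Hence $\Delta R$ is quasi-concave on $[\breve\gamma,\infty)$: increasing while $\Delta I>0$, decreasing thereafter.  The paper then checks the two endpoints, $\Delta R(\breve\gamma)\geq\Delta\mu>0$ (via a revealed-preference argument at the corner where the discriminatory signal degenerates) and $\lim_{\gamma\to\infty}\Delta R(\gamma)=\Delta\mu(1-\bar\mu)>0$, and concludes $\Delta R>0$ globally.  If you want to salvage your approach, you would need either this cancellation or an independent proof that the zero-crossing $\tilde\gamma$ (when it exists) lies above $\bar\gamma$; the latter is not obvious and the paper does not claim it.
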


To develop the intuitions behind these results, we first restrict the
principal to using impartial signal structures. Under this restriction, the signal
acquired by the principal becomes less informative about agents' productivity, in
the sense of Blackwell, as the attention cost parameter increases. Agents best
respond by exerting high effort when the attention cost parameter is low, and low
effort when the attention cost parameter is high. The symmetry in agents' effort choices, in
turn,  justifies the use of an impartial signal structure to begin with. The two regimes are separated by the threshold value $\lambda^*>0$, at which the game has two impartial equilibria. For all $\lambda \neq \lambda^*$, the impartial equilibrium is unique.

We next allow the principal to use discriminatory signal structures. As an illustration, consider the
numerical example in Table~\ref{table1},  which takes a  discriminatory effort profile as given and solves for the optimal signal structure, i.e., one that maximizes the principal's expected profit.
\begin{table}[hht]
\centering
\caption{Optimal signal structure for 
$\bm\mu=(\bar\mu, \ul\mu)=(.8, .6)$, $\lambda=.3$.}
\begin{tabular}{ |c|c|c|c| }
\hline
$\Delta \theta$ & 1& 0 & -1 \\ \hline
$\mathbb{P}(\Delta \theta \mid \bm\mu)$ & .32 & .56 & .12 \\ \hline
$\pi(\Delta \theta)$ & .98 & .74 & .09 \\ \hline
\end{tabular}
\vspace{-5pt}
\label{table1}
\end{table} 
\noindent Since $m$ is known to work harder than $w$, promoting $m$ over $w$ is
the safer choice for the principal. In consequence,  a rationally inattentive principal will favor $m$
unless $w$ is strictly more productive.   While
$w$ is strongly favored by the principal when she is strictly more productive than
$m$ ($\pi(-1)=.09$), that event occurs with a small probability because $m$
works harder than $w$.  $w$ is treated unfavorably otherwise. In particular, and
importantly, this occurs when she is as productive as $m$ ($\pi(0)=.74$).  A benefit stemming from this distortion is that the principal doesn't need to carefully distinguish between whether $m$ is more productive than,
or equally productive as $w$ (indeed $\pi(1)=.98$ is not very different from
$\pi(0)=.74$) --- a practice that saves on attention cost. At the same time,
the signal structure still does a decent job in selecting the most
productive agent, as it generates an expected revenue of $.90$, compared to the expected
revenue $.92$ in the benchmark case where information acquisition is costless.  

Turning to agents' incentives to invest, under the above numerical assumptions,  $w$ can only increase her winning probability by 
\[\Delta\mu [\bar\mu(\pi(1)-\pi(0))+(1-\bar\mu)(\pi(0)-\pi(-1))]=.081\]
if she exerts high effort rather than low effort, holding everything else constant. The analogous decrease for $m$ is 
\[\Delta\mu [(1-\ul\mu)(\pi(1)-\pi(0))+\ul\mu(\pi(0)-\pi(-1))]=.098\]  
if he shirks rather than work.  If $C \in (.081, .098)$, then it is indeed optimal for  $m$ to exert high effort and
$w$ low effort. In turn, this justifies the principal's use of the discriminatory signal structure that favors $m$.

Taken together, our main results present an important lesson: Conducting discriminatory performance evaluations allows the principal
to be rationally inattentive and to sustain a discriminatory effort profile in
equilibrium when the attention cost parameter is intermediate, i.e., $\lambda \in [\ul\lambda, \bar\lambda]$. Compared to the impartial, low-effort equilibrium,  the discriminatory equilibrium enjoys a revenue advantage because it still induces one agent to work,  as well as a cost advantage because it is cheaper to implement.  
If, instead, the impartial, low-effort equilibrium is being played, then the choice of which agent to promote is a priori nonobvious because both agents work equally hard. Consequently, the principal must compare them carefully at a significant cost.  For these reasons, the discriminatory equilibrium is more profitable than the impartial, low-effort equilibrium whenever they coexist, i.e., $\lambda \in [\lambda^*, \bar\lambda]$.

The comparison between the discriminatory equilibrium and the impartial, high-effort equilibrium is more delicate. While the former may or may not have a cost advantage, it definitively suffers a revenue disadvantage compared to latter. It turns out that the revenue concern is always of a first-order importance, which makes the discriminatory equilibrium less profitable than the impartial, high-effort equilibrium whenever they coexist, i.e., $\lambda \in [\ul\lambda,\lambda^*]$.

\subsection{Analysis}\label{sec_analysis}
This section provides a more formal analysis of Theorems~\ref{thm_existenceuniqueness} and \ref{thm_mostprofitable}. We begin by characterizing players' best response functions, followed by a complete characterization of equilibria.  We next outline the key steps in comparing equilibrium profitability, leaving the technical details to Appendix \ref{sec_proof}. 
\vspace{-12pt}

\paragraph{Notations.} For an arbitrary signal structure $\pi$, write $\bar\pi$ for the probability that $m$ is recommended for promotion, as well as $X$ for $\pi(1)-\pi(0)$ and $Y$ for $\pi(0)-\pi(1)$. $X$ represents the increase in $m$'s promotion probability when he strictly outperforms $w$ rather than tying with her performance, and $Y$ represents the decrease in $m$'s promotion probability when he strictly underperforms $w$ rather than tying with her performance. $\pi$ is impartial if and only if $\pi(0)=1/2$  and $X=Y$.

It is easier to work with $\gamma \coloneqq \exp(1/\lambda)$ rather than the attention cost parameter $\lambda$, so we will follow this convention in what follows. $\gamma$ is strictly decreasing in $\lambda$,  $\gamma \rightarrow +\infty$ as $\lambda \rightarrow 0$, and $\gamma \rightarrow 1$ as $\lambda \rightarrow +\infty$. 

It is also convenient to express Part (ii) of Assumption \ref{assm_regularity} as $c<\bar\mu (1-\bar\mu)/(A+B)$, where $c \coloneqq C/\Delta \mu$ represents the \emph{effective cost} of exerting high effort, normalized by the change $\Delta\mu \coloneqq \bar\mu - \ul \mu$ in the probability of experiencing a high productivity shock when working rather than shirking. $A\coloneqq \bar\mu(1-\ul\mu)$ denotes the probability that $m$ strictly outperforms $w$, i.e., $\Delta \theta=1$, under the discriminatory effort profile $(\bar\mu, \ul\mu)$, and $B$ denotes the probability that he strictly underperforms $w$, i.e., $\Delta \theta=-1$, under the same effort profile.  Importantly, $A>B$. 

\vspace{-10pt} 
\paragraph{Best response functions.}\label{sec_br}
 Consider first  the problem faced by the principal,  holding agents' effort profile $\bm{\mu}$ fixed.  Call the solution to this problem the \emph{optimal signal structure for $\bm{\mu}$}. By \cite{matvejka2015rational}, this signal structure is either degenerate, satisfying $\pi(\Delta \theta) \equiv 0$ or $1$, or it is nondegenerate and satisfies  $\pi(\Delta \theta) \in (0,1)$ $\forall \Delta \theta$. The next lemma solves for the optimal signal structure for every effort profile.

\begin{lemma}\label{lem_optimalsignal}
\begin{enumerate}[(i)]
\item The optimal signal structure for $(\bar{\mu}, \bar{\mu})$ or $(\ul{\mu}, \ul{\mu})$ is nondegenerate and impartial.  It satisfies $\bar{\pi}=\pi(0)=1/2$ and $X=Y=g(\gamma)$, where
\[g(\gamma)
\coloneqq \frac{\gamma-1}{2(\gamma+1)} \text{ satisfies } g>0 \text{ and } \frac{dg(\gamma)}{d\lambda}<0 \text{ } \forall \lambda>0.\]
\item The optimal signal structure for $(\bar{\mu}, \ul{\mu})$ is degenerate if $\lambda \geq \breve{\lambda} \coloneqq (\ln (A/B))^{-1}>0$, and it is nondegenerate otherwise.  In the second case, the signal structure is discriminatory and satisfies $\bar{\pi}=\pi(0)=(\gamma A-B)[(\gamma-1)(A+B)]^{-1} \in (1/2,1)$ and $X=f(\gamma)<Y=Af(\gamma)/B$, where 
\[f(\gamma) \coloneqq \frac{(\gamma A-B)(\gamma B-A)}{(\gamma^2-1)(A+B)A} \text{ satisfies } f>0 \text{ and } \frac{df(\gamma)}{d\lambda}<0 \text{ } \forall \lambda \in (0, \breve{\lambda}). \]
\end{enumerate}
\end{lemma}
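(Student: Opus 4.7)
My plan is to apply the Matvejka--McKay (MM) characterization of rational inattention with binary actions. Since the principal's decision --- promote $m$ or $w$ --- is binary and the payoff difference in state $\Delta\theta\in\{-1,0,1\}$ is exactly $\Delta\theta$, MM tells me that any nondegenerate optimum takes the Gibbs form
\[
\pi(\Delta\theta)=\frac{\bar\pi\,\gamma^{\Delta\theta}}{\bar\pi\,\gamma^{\Delta\theta}+(1-\bar\pi)},\qquad \bar\pi=\sum_{\Delta\theta}P(\Delta\theta\mid\bm\mu)\,\pi(\Delta\theta).
\]
The assertion $\bar\pi=\pi(0)$ is then immediate because $\gamma^{0}=1$; everything else reduces to computing $\bar\pi$ from self-consistency and then checking feasibility of the resulting probabilities.

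For the symmetric profiles $(\bar\mu,\bar\mu)$ and $(\ul\mu,\ul\mu)$, the distribution of $\Delta\theta$ is sign-symmetric, and symmetry of the objective in $a\leftrightarrow 1-a$ forces $\bar\pi=1/2$; self-consistency then holds automatically, and substitution gives $\pi(1)=\gamma/(1+\gamma)$, $\pi(-1)=1/(1+\gamma)$, whence $X=Y=(\gamma-1)/[2(\gamma+1)]=g(\gamma)$, with $\pi(\Delta\theta)\in(0,1)$ for every $\gamma>1$ (hence always nondegenerate). For the asymmetric profile $(\bar\mu,\ul\mu)$, with $P(1)=A$ and $P(-1)=B$, I would substitute the Gibbs form into self-consistency and, after setting $u:=\bar\pi(\gamma-1)$, verify the conjectured $\bar\pi=(\gamma A-B)/[(\gamma-1)(A+B)]$: under this choice $1+u=A(1+\gamma)/(A+B)$ and $\gamma-u=B(1+\gamma)/(A+B)$, collapsing the consistency equation into the identity $A\gamma/(1+u)+B/(\gamma-u)=A+B$. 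Plugging $\bar\pi$ back into the Gibbs form and factoring gives $X=f(\gamma)$ and $Y=(A/B)f(\gamma)$, with the common factor $(\gamma A-B)(\gamma B-A)$ emerging from telescoping.

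Feasibility of the interior solution demands $\pi(\Delta\theta)\in(0,1)$ for all $\Delta\theta$; checking each inequality reduces to the single constraint $\gamma B>A$, i.e., $\lambda<\breve\lambda=(\ln(A/B))^{-1}$. At the critical value $\gamma=A/B$, direct substitution shows that $\pi(1),\pi(0),\pi(-1)$ all collapse to $1$, so the interior branch glues continuously to the corner policy $\pi\equiv 1$; this corner clearly dominates $\pi\equiv 0$ because $\mathbb{E}[\Delta\theta\mid\bar\mu,\ul\mu]=A-B>0$, so it is the unique optimum whenever $\lambda\ge\breve\lambda$. The comparative-statics claims $dg/d\lambda<0$ and $df/d\lambda<0$ then follow from $d\gamma/d\lambda=-\gamma/\lambda^{2}<0$ combined with the routine computations $dg/d\gamma=1/(\gamma+1)^{2}>0$ and $df/d\gamma>0$ on $\gamma>A/B$, the latter obtained by direct differentiation of the product $(\gamma A-B)(\gamma B-A)/(\gamma^{2}-1)$.

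The main obstacle is the algebraic bookkeeping in part (ii): verifying the closed form for $\bar\pi$ from the rational self-consistency equation (a quadratic after clearing denominators) and then extracting the common factor $(\gamma A-B)(\gamma B-A)$ from $\pi(1)-\pi(0)$ and $\pi(0)-\pi(-1)$ to arrive at the asymmetric ratio $Y/X=A/B$. Once this is in place, the feasibility threshold $\breve\lambda$ drops out immediately, the sign analysis of $df/d\gamma$ on $(A/B,\infty)$ is a short one-variable calculation, and part (i) follows from symmetry with essentially no extra work.
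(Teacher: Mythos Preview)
Your proposal is correct and follows essentially the same route as the paper: both apply the Matvejka--McKay logit formula together with Bayes plausibility to solve for $\bar\pi$, $X$, and $Y$ explicitly, and both simply assert $g'>0$ and $f'>0$ without spelling out the derivative. The only cosmetic difference is that the paper obtains the degeneracy threshold by quoting Proposition~1 of Yang (2020), which gives the condition $p(1)/p(-1)\ge\gamma$ directly, whereas you recover the same threshold by checking when the interior solution satisfies $\bar\pi\in(0,1)$ and then comparing the two corners---both are standard consequences of the same underlying characterization.
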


Lemma \ref{lem_optimalsignal} conveys four important messages. First, if an optimal signal structure is nondegenerate,  the conditional probability that it recommends $m$ for promotion is strictly increasing in the differential productivity between $m$ and $w$, i.e., $X, Y>0$.  

Second, when agents are equally productive, the conditional probability that an optimal signal structure promotes $m$ equals the average probability, i.e., $\pi(0)=\bar{\pi}$.

Third, the optimal signal structure is impartial when both agents exert the same level of effort, and it is discriminatory otherwise. Consequently, any impartial equilibrium must induce the same level of effort from the agents, while any discriminatory equilibrium must do the opposite. 

Finally, as the attention cost parameter $\lambda$ increases,  any optimal signal structure becomes ``noisier,'' in that the conditional probabilities that it recommends the most productive agent for promotion become more concentrated around the average probability, i.e., $X$ and $Y$ are both decreasing in $\lambda$.  For impartial optimal signal structures, we can strengthen the notion of informativeness to that of Blackwell. 

We next turn to agents' best response functions. The next lemma solves for an agent's best response to a given signal structure and the other agent's effort choice. 

\begin{lemma}\label{lem_optimaleffort}
Fix any signal structure $\pi$.  For any $\mu_w \in \{\ul\mu, \bar\mu\}$, $m$ prefers to exert high effort rather than low effort if and only if 
\[(1-\mu_w) X + \mu_w Y \geq c.\]
For any $\mu_m \in \{\ul\mu, \bar\mu\}$, $w$ prefers to exert high effort rather than low effort if and only if  
\[\mu_m X + (1-\mu_m) Y \geq c.\]
\end{lemma}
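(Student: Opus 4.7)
The proof is a direct computation of each agent's promotion probability as a function of the joint effort profile, followed by comparing the high-effort and low-effort payoffs. My plan is to condition on the realization of the \emph{other} agent's productivity and use the law of total probability, so that the dependence on $\pi$ enters through the increments $X=\pi(1)-\pi(0)$ and $Y=\pi(0)-\pi(-1)$ in a transparent way.

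For agent $m$, I would write the probability that $m$ is promoted given $(\mu_m,\mu_w)$ as
\[
P_m(\mu_m,\mu_w)=\mu_m\mu_w\pi(0)+\mu_m(1-\mu_w)\pi(1)+(1-\mu_m)\mu_w\pi(-1)+(1-\mu_m)(1-\mu_w)\pi(0),
\]
since $\tilde{\theta}_m$ and $\tilde{\theta}_w$ are conditionally independent Bernoulli variables. The key observation is that $P_m$ is linear in $\mu_m$, so
\[
P_m(\bar\mu,\mu_w)-P_m(\ul\mu,\mu_w)=\Delta\mu\bigl[(1-\mu_w)(\pi(1)-\pi(0))+\mu_w(\pi(0)-\pi(-1))\bigr]=\Delta\mu\bigl[(1-\mu_w)X+\mu_w Y\bigr].
\]
Agent $m$'s payoff from effort $\mu_m$ is $P_m(\mu_m,\mu_w)-C(\mu_m)$, so he weakly prefers $\bar\mu$ to $\ul\mu$ iff the above gain is at least $C(\bar\mu)-C(\ul\mu)=C$. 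Dividing by $\Delta\mu$ and using $c=C/\Delta\mu$ yields the stated threshold $(1-\mu_w)X+\mu_w Y\geq c$.

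For agent $w$, I would exploit the fact that $w$ is promoted iff $m$ is not, so her promotion probability is $1-P_m(\mu_m,\mu_w)$. Differencing in $\mu_w$ this time,
\[
\bigl[1-P_m(\mu_m,\bar\mu)\bigr]-\bigl[1-P_m(\mu_m,\ul\mu)\bigr]=-\Delta\mu\bigl[\mu_m(\pi(0)-\pi(1))+(1-\mu_m)(\pi(-1)-\pi(0))\bigr]=\Delta\mu\bigl[\mu_m X+(1-\mu_m)Y\bigr],
\]
where the sign flip is exactly what swaps the roles of the two weights compared with $m$'s condition. Setting this gain against the cost differential $C$ and dividing by $\Delta\mu$ delivers the claim.

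There is no real obstacle: the lemma is a linear-algebra exercise, and the only thing to be careful about is bookkeeping the sign when moving from $m$'s promotion probability to $w$'s, so that the asymmetric pairing of $X,Y$ with $\mu_w, 1-\mu_w$ (respectively $1-\mu_m, \mu_m$) comes out correctly. Since $\pi$ is fixed and the payoffs are linear in own effort, no concavity or fixed-point argument is needed.
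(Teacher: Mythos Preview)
Your proof is correct and is essentially the same direct computation as the paper's: both write out the promotion probability as a bilinear function of $(\mu_m,\mu_w)$, difference in the relevant agent's own effort, and divide by $\Delta\mu$. Your treatment of $w$ via $1-P_m$ is a minor streamlining of the paper's version, which instead writes out $w$'s side from scratch, but the argument is identical in substance.
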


From $m$'s perspective, $X$ is a carrot that is effective when $w$ has a low productivity (hence $m$ can outperform $w$ and raise his chance of winning), and $-Y$ is a stick that is effective when $w$ has a high productivity.  The overall incentive power that a signal structure provides to him is $(1-\mu_w)X+\mu_w Y$. By exerting high effort rather than low effort, $m$ can increase his chance of getting promoted by $\Delta \mu [(1-\mu_w)X+\mu_w Y]$.  In the case where $(1-\mu_w)X+\mu_wY$ exceeds the effective cost $c \coloneqq C/\Delta \mu$ of exerting high effort, exerting high effort is optimal for $m$. 

The problem faced by $w$ can be solved analogously.  In case $\pi$ is an optimal signal structure, Lemma \ref{lem_optimalsignal} implies that sustaining high effort becomes harder as $\lambda$ increases. 

\vspace{-10pt}

\paragraph{Equilibrium construction.}\label{sec_eqm} Consider first the case of impartial equilibria, in which the optimal signal structure satisfies $X=Y=g(\gamma)$.  It induces both agents to exert high effort if $g(\gamma) \geq c$, and low effort if $g(\gamma)\leq c$.  The two regimes are separated by a single threshold: 
 \[\lambda^* \coloneqq (\ln g^{-1}(c))^{-1},\]
 at which the game has two impartial equilibria. For all $\lambda \neq \lambda^*$, the impartial equilibrium is unique.

The discriminatory case is illustrated by Figure~\ref{figure2}.  
In order to induce high effort from $m$ and low effort from $w$,  the profile $(X,Y)$ must lie above the blue line segment and below the black line segment.  
\begin{figure}[ht!]
\begin{center}
\scalebox{1}{
\begin{tikzpicture}[scale=.9]
\draw[->] (0,0)--(5.9,0);
\node[right, font=\scriptsize] at (5.9,0){$X$};
\draw[->] (0,0)--(0,5.9);
\node[above, font=\scriptsize] at (0,5.9){$Y$};

\draw[fill=black,opacity=0.1]  (2,2) -- (0,3.333) -- (0,5.714) -- cycle;
\node[right, font=\scriptsize] at (-.1, 3.6){$IC_{m} \wedge IC_{w}$};
 \draw[-,blue, thick] (0, 3.333)--(5,0);
 \draw[-,black,thick] (0, 5.714)--(3.07,0);
\draw[->] (.8, 5.2)--(.5,5);
\node[black,  font=\scriptsize] at (2.3,  5.2){$\overline{\mu} X +(1-\overline{\mu})Y=c$};
\draw[->] (3.4,1.45)--(3.1,1.35);
\node[right, font=\scriptsize] at (3.3,1.5){ $(1-\underline{\mu}) X +\underline{\mu} Y = c$};

\draw[-,red,thick] (0,0)--(2.3,4.5);
\node[right, font=\scriptsize] at (2.3,4.5){ $Y = AX/B$};

\draw[dashed] (1.49,0)--(1.49,2.96);
\node[below, font=\scriptsize] at (1.6,0){ $\overline{X}$};
\draw[dashed] (1.25,0)--(1.25,2.5);
\node[below, font=\scriptsize] at (1.20,0){$\underline{X}$};

\draw[dashed](0,0)--(2,2);
\node[right, font=\scriptsize] at (0.2,0.2){$45^{\circ}$};
\end{tikzpicture}
}
\caption{Equilibrium in the discriminatory case.}\label{figure2}
\end{center}
\vspace{-15pt}
\end{figure}
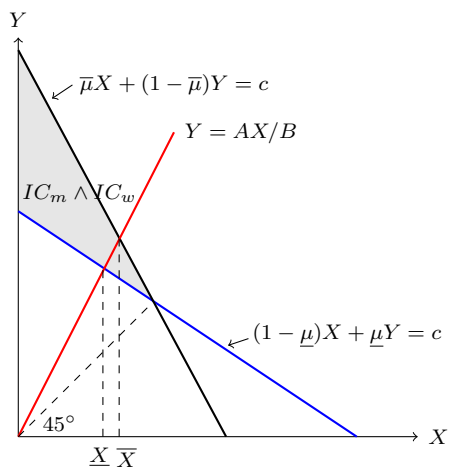
Under the assumption that $\bar\mu+\ul\mu>1$, the intersecting area, marked grey, lies above the 45-degree line, indicating a reliance on the carrot rather than the stick to provide incentives, i.e., $Y>X$. This configuration aligns with the optimal discriminatory signal structure, which features primarily a stick, i.e., $Y=AX/B>X$. Formally, the red ray on which the signal structure lies crosses the grey area twice,  at $(\ul X,  A\ul X/B)$ and $(\bar X, A \bar X/B)$, respectively. 
Thus for any $X=f(\gamma) \in [\underline{X}, \overline{X}]$, the profile $(X, AX/B)$ can arise in an equilibrium.  The last condition is equivalent to $\lambda \in [\ul\lambda, \bar\lambda]$, where 
 \[\ul\lambda \coloneqq (\ln f^{-1}(\bar X))^{-1} \text{ and }\bar\lambda \coloneqq (\ln f^{-1}(\ul X))^{-1}.\]
 
It remains to sign and rank $\ul\lambda$, $\lambda^*$, and $\bar\lambda$. This step is technical and is relegated to Appendix \ref{sec_proof}.  

\vspace{-10pt}
\paragraph{Equilibrium profit.} Clearly, the high effort profile generates more expected revenue to the principal than the discriminatory effort profile, followed by the low effort profile. The next lemma ranks the attention costs incurred by these effort profiles.

\begin{lemma}\label{lem_costrank}
Let everything be as in Theorem \ref{thm_mostprofitable}. Then the low effort profile incurs the highest attention cost, while the attention cost incurred by the discriminatory effort profile and that of the high effort profile cannot be generally ranked.  
\end{lemma}

Taken together, we conclude that the discriminatory equilibrium is the most profitable to the principal when $\lambda \in (\lambda^*, \bar\lambda]$, and that the impartial, high-effort equilibrium is more profitable than the impartial, low-effort equilibrium at $\lambda=\lambda^*$. In Appendix \ref{sec_proof}, we use more nuanced characterizations of the revenue and attention cost functions to show that the discriminatory equilibrium is less profitable than the high-effort, impartial equilibrium when $\lambda \in [\ul\lambda, \lambda^*]$. Combining the arguments yields Theorem \ref{thm_mostprofitable}.

\section{Implication and discussion}\label{sec:implications} 
We now examine the comparative statics and welfare consequences of our results, as well as their implications for labor market discrimination. This is followed by a discussion of the roles of key model assumptions in our analysis.  
\vspace{-10pt}
\paragraph{Gender and racial gap in subjective performance evaluation.} 
Gender and racial stereotypes continue to disadvantage women and minorities through biased subjective performance appraisals \citep{mackenzie2019most}.
Years of sociological research reveal that women get shorter, more vague, and less constructive critical feedback \citep{wynn2018},  and that they are held to higher performance standards and face increased scrutiny when being evaluated \citep{correll2016research}.  

Our model speaks to these stylized facts, predicting that minorities are rated more harshly than majorities in the discriminatory equilibrium, and that the only way for minorities to gain recognition from the employer is to surpass the majorities by a large margin.  Such a hurdle discourages minorities from undertaking costly investments, resulting in less frequent promotions and lower earnings on average.

\vspace{-10pt}
\paragraph{Implicit bias, stereotype, and the effectiveness of de-biasing programs.}\label{sec:debias}
Many scholars, across multiple disciplines, have
advanced the notion that limited attention triggers implicit biases and
stereotypes. The idea is that in attempting to make sense of the complex world surrounding us, we
regularly construct and use categorical representations to achieve efficient deployment of processing resources, causing implicit biases and stereotypes to emerge \citep{greenwald1995implicit, macrae2000social}. Evidence on the connection between attention and implicit discrimination abounds in areas such as labor economics, criminal justice, education, and healthcare \citep{bertrand2005implicit, eberhardt2020biased, warikoo2016examining, chapman2013physicians}. Notably, \cite{bertrand2005implicit} interpret the well-known study of discrimination through African-American names by \cite{bertrand2004emily} as evidence that time-constrained recruiters may allow implicit biases to guide their decisions.

Our model formalizes a causal link between limited attention and implicit discrimination. It predicts a nonmonotonic relation between the attention cost parameter and the equilibrium degree of discrimination; recall the statement of Theorem~\ref{thm_existenceuniqueness}, or the diagram in the  introduction. The nonmonotonic nature of the comparative-statics speaks to the varying effectiveness of the de-biasing training programs used by real-world organizations to address implicit discrimination. These programs share a common instruction: Every time a supervisor is supposed to make decisions that might adversely affect the supervisees (e.g., conduct performance evaluations),  it is reminded that he or she should ``slow down,  meditate, and follow elaborate procedures,'' so that decisions are made based on deliberations and factual information rather than quick instincts \citep{eberhardt2020biased}.\footnote{
The idea of using attention to intercept implicit discrimination has seen applications in other contexts.  Recently, the Oakland Policy Department adjusted its foot pursuit policy so that officers could no longer follow suspects as they run into backyards or blind alleys. Instead, officers were instructed to ``step back, slow down, call for backup, and think it through.'' Relatedly, Meta's ``Nextdoor Neighbor,'' a social network for residential neighbors to communicate through, recently started asking its users to provide detailed descriptions about the suspicious activities they wish to report to the system, because ``adding frictions allows users to act based on information rather than instinct'' \citep{eberhardt2020biased}. } The idea (and hope) behind is that one could alter the principal's (shadow) cost of acquiring information (as captured by $\lambda$),  through interventions such as increasing the amount of time committed to conducting performance evaluations.


By now, numerous corporations and nonprofit organizations have implemented programs of a similar sort, and tons of data are available for program evaluation.  Results of recent meta analysis are mixed,  leading \cite{greenwald2020implicit} to conclude that ``The popular media often suggests relying on one’s own mental resources to intercept implicit biases. Convincing evidence for the effectiveness of these strategies is not yet available in peer-reviewed publications.'' Our results put these mixed findings into perspective,  suggesting that rather than to abandoning the premise that limited attention triggers implicit biases, an alternative way to reconcile the aforementioned findings is to recognize that the exact relation between attention and implicit discrimination is more nuanced than previously thought. 

\vspace{-10pt}

\paragraph{Welfare.} An important consequence of our results is that one cannot Pareto rank the various kinds of equilibria.\footnote{Meanwhile, holding the equilibrium fixed, the principal's expected profit is decreasing in $\lambda$. Agents' welfare is invariant with $\lambda$ if the equilibrium is symmetric. If the equilibrium is discriminatory, then $m$'s expected payoff $\bar\pi-C$ is increasing in $\lambda$, while $w$'s expected payoff $1-\bar\pi$ is decreasing in $\lambda$. \label{fn_eqmwelfare}}  
\begin{proposition}\label{prop_welfare}
Let everything be as in Theorem \ref{thm_mostprofitable}. 
\begin{enumerate}[(i)]
\item On $[\lambda^*, \bar\lambda]$, the principal and $w$ prefer the impartial, high-effort equilibrium to the discriminatory equilibrium, while $m$'s preference is the opposite. 

\item  On $[\ul\lambda, \lambda^*]$, the principal and $m$ prefers the discriminatory equilibrium to the impartial, low-effort equilibrium, while $w$'s preference is the opposite. 

\end{enumerate}
\end{proposition}

The result concerning the principal is immediate from Theorem \ref{thm_mostprofitable}. As for agents, one can show that $m$ most prefers the outcome generated by the discriminatory equilibrium, followed by the impartial, low-effort equilibrium, and finally the impartial, high-effort equilibrium. $w$ most prefers the outcome generated by the impartial, low-effort equilibrium, followed by the impartial, high-effort equilibrium, and finally the discriminatory equilibrium. Depending on the exact welfare weights of the principal and agents, reduced discrimination may either enhance or undermine social welfare. This finding further complicates the picture painted by our results,  suggesting that the aforementioned de-biasing programs might not only send the equilibrium degree of discrimination in the wrong direction, but could also have unintended welfare consequences. 


\vspace{-10pt}

\paragraph{Model discussions.}\label{paragraphmodeldiscussion} Our findings differ from the standard Arrovian mechanism of coordination failure, which obtains discrimination as a Pareto-dominated,  ``bad'' equilibrium for all parties involved (see,  e.g., \citealt{coateloury1993}).  Rational inattention is clearly at work here, because when attention cost is (close to) zero, our game has a unique, impartial, equilibrium that induces both agents to work (recall the diagram in the introduction); it would not feature the coordination failure that is distinctive of Arrow's model of statistical discrimination. Given this benchmark, one can safely attribute all our findings, especially those concerning the discriminatory equilibrium, to rational inattention. 
To ensure that (work, work) is the unique equilibrium when $\lambda \approx 0$, we use Part (ii) of Assumption \ref{assm_regularity}: $c<\bar\mu (1-\bar\mu)/(A+B)$, to bound the effective effort cost from above. Under this condition, together with Part (i) of Assumption \ref{assm_regularity}, all cutpoints $\lambda^*$, $\ul \lambda$, and $\bar\lambda$ are strictly positive. 

The role of tournament as the relevant incentive scheme in our model is also key. Rational inattention turns this competition into a competition for the principal's limited attention, and justifies the use of a discriminatory signal structure in the principal's most preferred equilibrium. If, instead, the principal forms separate contractual relationships with individual agents as in, e.g., \cite{coateloury1993} and \cite{fosgerau2021equilibrium},  then the most profitable equilibrium signal structure between a principal-agent pair is generically unique,  hence discrimination cannot generically arise as the most profitable equilibrium  among  ex-ante identical agents.  

\label{paragraphA1} Additionally, Part (i) of Assumption~\ref{assm_regularity}: $\bar\mu+\ul\mu>1$, ensures that a discriminatory equilibrium emerges under some attention cost parameters. If, instead, $\bar\mu+\ul\mu \leq 1$, then $(X,Y)$ must lie below the 45 degree line in order induce $m$ to work and $w$ to shirk. However, this reliance on the carrot rather than the stick to provide incentives, i.e., $X>Y$, is inconsistent with the optimal discriminatory signal structure, whose primary feature is the stick, i.e., $Y=AX/B>X$. As depicted in Figure~\ref{figure3}, since the red ray on which the optimal signal structure lies doesn't intersect the grey area, no discriminatory equilibrium generically exists when $\bar\mu+\ul\mu \leq 1$.\footnote{$\bar\mu+\ul\mu=1$ happens in the special case where productivity is a noiseless measure of the underlying effort, i.e., $(\bar\mu, \ul \mu)=(1,0)$. This case is worth emphasizing because we use mutual information to measure attention cost. An important property of mutual information (more generally, bounded uniformly separable attention costs) is that information acquisition becomes free at degenerate priors. This property often poses conceptual and technical challenges to the analysis of strategic situations in which players hold endogenous prior beliefs about each other; see  \cite{bloedel2020cost}, \cite{ravid2020ultimatum}, and \cite{denti2022experimental} for discussions of the issue and proposed remedies. 

In our case, the game has a unique, impartial, equilibrium that sustains the high effort profile if $(\bar\mu, \ul\mu)=(1,0)$. The reason is that, under the high effort profile, both agents obtain an expected payoff of $1/2-C>0$.  A  unilateral deviation to low effort will be detected for sure, at zero cost, and reduces the deviator's payoff to zero,  and so is  unprofitable. The proof of why $(\bar\mu, \ul\mu)$ or $(\ul\mu, \ul\mu)$ cannot be sustained in an equilibrium is analogous. } 
\begin{figure}[ht!]
\begin{center}
\scalebox{1}{
\begin{tikzpicture}[scale=.9]
\draw[->] (0,0)--(5.9,0);
\node[right, font=\scriptsize] at (5.9,0){$X$};
\draw[->] (0,0)--(0,5.9);
\node[above, font=\scriptsize] at (0,5.9){$Y$};

\draw[fill=black,opacity=0.1]  (2,2) -- (3.1, 0) -- (5, 0) -- cycle;
\node[right, font=\scriptsize] at (2.7, .4){$IC_{m} \wedge IC_{w}$};
 \draw[-,black, thick] (0, 3.333)--(5,0);
 \draw[-,blue,thick] (0, 5.714)--(3.07,0);
\draw[->] (.8, 5.2)--(.5,5);
\node[black,  font=\scriptsize] at (2.3,  5.2){$(1-\ul\mu) X +
\ul \mu Y=c$};
\draw[->] (3.4,1.45)--(3.1,1.35);
\node[right, font=\scriptsize] at (3.3,1.5){ $\bar\mu X +(1-\bar\mu) Y = c$};

\draw[-,red,thick] (0,0)--(2.3,4.5);
\node[right, font=\scriptsize] at (2.3,4.5){ $Y = AX/B$};

\draw[dashed](0,0)--(2,2);
\node[right, font=\scriptsize] at (0.2,0.2){$45^{\circ}$};
\end{tikzpicture}
}
\caption{No discriminatory equilibrium exists generically when $\bar\mu+
\ul\mu\leq 1$.}\label{figure3}
\end{center}
\vspace{-15pt}
\end{figure}
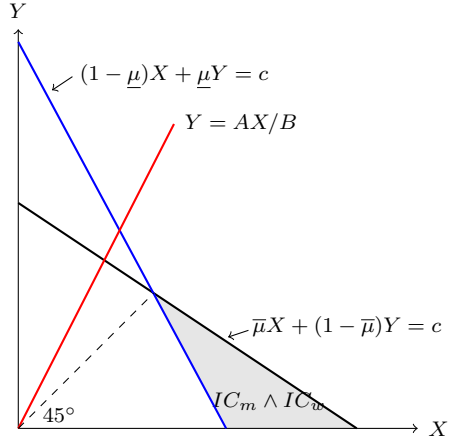

If we maintain Assumption \ref{assm_regularity} but abandon the regularity conditions that ensure $\lambda^*<\bar\lambda$ (stated in Theorem \ref{thm_existenceuniqueness}), then $[\ul\lambda, \bar\lambda] \subsetneq [0, \lambda^*]$. Consequently, the discriminatory equilibrium can only coexist with the impartial, high-effort equilibrium, and it is less profitable than the latter under Assumption \ref{assm_regularity}(i) (see Appendix \ref{sec_proof} for the proof). 

\vspace{-10pt}
\paragraph{Robustness.} In the Online Appendix, we explore the robustness of our model across various dimensions, including alternative attention cost functions, solution concepts, game sequences, agent heterogeneity, and more complex decision-making scenarios for the principal. Our findings indicate that the dependence of the attention cost function on the principal's prior belief is critical. Restricting this aspect of the strategic interaction between the principal and agents diminishes our ability to sustain discrimination as the most profitable equilibrium. \label{referee:costs}

\section{Extension: Affirmative Action Quota}\label{sec_aa}
Our model serves to address various phenomena associated labor market discrimination and to evaluate some policy interventions used to curb discrimination in practice. In this section, we focus on affirmative action quota, which ensure a certain representation of each demographic group. In our model,  this translates into an equal probability of promotion for $m$ and $w$ on average:
\begin{equation}\label{eqn_quota}
\tag{Q} \bar\pi=\frac{1}{2}.
\end{equation} 
The next theorem delineates the channel through which the promotion quota operates.

\begin{theorem}\label{thm_aa}
Under the assumption that $\bar\mu+\ul\mu>1$,  equilibria of the game with quota coincide with the impartial equilibria of the baseline model.
\end{theorem}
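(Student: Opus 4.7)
\textbf{Proof plan for Theorem~\ref{thm_aa}.} The plan is to prove the two inclusions between the equilibrium set of the quota game and the set of impartial equilibria of the baseline.

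\emph{Easy direction.} I would first show that every impartial equilibrium $(\pi^*,\bm\mu^*)$ of the baseline satisfies (Q). The impartiality conditions $\pi^*(0)=1/2$ and $X=Y$ combined with the symmetric effort profile $\bm\mu^*\in\{(\bar\mu,\bar\mu),(\ul\mu,\ul\mu)\}$ give, via a direct computation, $\bar\pi^*=\mu(1-\mu)[\pi^*(1)+\pi^*(-1)] + (1-2\mu(1-\mu))\pi^*(0) = 1/2$ (using $\pi^*(1)+\pi^*(-1)=1$). Since $\pi^*$ was already optimal for the principal on a larger feasible set, and agents' best responses are unchanged, $(\pi^*,\bm\mu^*)$ remains an equilibrium under the quota.

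\emph{Hard direction---symmetric case.} Consider an equilibrium $(\pi^*,\bm\mu^*)$ of the quota game. If $\bm\mu^*$ is symmetric, then by Lemma~\ref{lem_optimalsignal}(i) the unconstrained optimal signal structure is impartial and thus automatically satisfies $\bar\pi=1/2$. Hence the quota does not bind and $(\pi^*,\bm\mu^*)$ is an impartial equilibrium of the baseline.

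\emph{Hard direction---asymmetric case, the crux.} The main work is to rule out $\bm\mu^*=(\bar\mu,\ul\mu)$ (the other asymmetric profile follows by symmetry). I would solve the principal's problem with (Q) by writing a Lagrangian with multiplier $\eta$ on the quota. Since $H(\bar\pi)=\log 2$ is constant under (Q), the objective becomes linear plus $\lambda\sum_{\Delta\theta}\mathbb{P}(\Delta\theta)H(\pi(\Delta\theta))$, and interior FOCs (setting $\kappa\coloneqq e^{\eta/\lambda}$) yield
\[
\pi(1)=\frac{\gamma}{\gamma+\kappa},\qquad \pi(0)=\frac{1}{1+\kappa},\qquad \pi(-1)=\frac{1}{1+\kappa\gamma}.
\]
A short calculation gives $X=\kappa(\gamma-1)/[(\gamma+\kappa)(1+\kappa)]$, $Y=\kappa(\gamma-1)/[(1+\kappa)(1+\kappa\gamma)]$, so $X/Y=(1+\kappa\gamma)/(\gamma+\kappa)$, which exceeds $1$ iff $\kappa>1$. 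Next I would show that the quota forces $\kappa>1$: evaluating $\bar\pi$ at $\kappa=1$ gives $(A\gamma+B)/(\gamma+1)+(1-A-B)/2$, which exceeds $1/2$ whenever $A>B$ and $\gamma>1$ (the key identity is $2(A\gamma+B)-(A+B)(\gamma+1)=(A-B)(\gamma-1)>0$), so to drive $\bar\pi$ down to $1/2$ the multiplier must satisfy $\kappa>1$. Here Assumption~\ref{assm_regularity} enters via $\bar\mu+\ul\mu>1\iff A>B$. Therefore $X^*>Y^*$ at any interior quota-constrained optimum. But by Lemma~\ref{lem_optimaleffort}, sustaining $(\bar\mu,\ul\mu)$ requires both $(1-\ul\mu)X^*+\ul\mu Y^*\geq c$ and $\bar\mu X^*+(1-\bar\mu)Y^*\leq c$, whose difference is $(\bar\mu+\ul\mu-1)(Y^*-X^*)\geq 0$, forcing $Y^*\geq X^*$ under $\bar\mu+\ul\mu>1$. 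Contradiction.

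\emph{Expected obstacles.} The main subtlety is handling non-interior optima: the only degenerate signal structure compatible with (Q) is $\pi\equiv 1/2$, and for that structure $X=Y=0$ so $m$'s incentive constraint $0\geq c>0$ fails, ruling out $(\bar\mu,\ul\mu)$ as well. Verifying that corners like $\pi(\Delta\theta)\in\{0,1\}$ are suboptimal is standard (entropy's derivative blows up at the endpoints). The rest is the computation outlined above; the substantive economic content is encapsulated in the sign of $A-B$ controlling the direction in which the quota bends the optimal signal.
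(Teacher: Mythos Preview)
Your plan is correct and follows essentially the same route as the paper, which packages the $X>Y$ computation as Lemma~\ref{lem_aa}: impartial equilibria survive the quota trivially; for the asymmetric profile one solves the quota-constrained principal's problem via a Lagrangian, shows $X>Y$ at the constrained optimum, and contradicts the joint IC requirement $Y\geq X$ (which is exactly where $\bar\mu+\ul\mu>1$ enters). One slip to fix: the assertion ``$\bar\mu+\ul\mu>1\iff A>B$'' is false, since $A-B=\bar\mu(1-\ul\mu)-\ul\mu(1-\bar\mu)=\bar\mu-\ul\mu>0$ unconditionally; this only strengthens your $\kappa>1$ step, and the hypothesis $\bar\mu+\ul\mu>1$ is used solely in the IC-difference inequality, precisely as you deploy it two lines later.
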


Time has not quelled controversy over affirmative action quotas since their introductions in the 1960s and 1970s, with recent studies seeking to understand the channels through which affirmative action quotas operate (see \citealt{holzer2000assessing},  \citealt{fang2011theories}, and \citealt{doleac2021review} for surveys).  Theorem~\ref{thm_aa} adds to this debate, showing that in the current context, the promotion quota operates through eliminating the discriminatory equilibrium of the baseline model without impacting on the impartial equilibria.  More interestingly,  the use of quota does not generate any new equilibrium as a byproduct --- a result that sets our analysis apart from alternative models of Arrovian discrimination such as \cite{coateloury1993}, in which the use of quota may generate new, ``patronizing,'' equilibria whereby the minority group works even less harder than before.

As it turns out, quota operates in our model through effectively subsidizing the principal for hiring the minority.  Technically, it turns the the principal's problem into the following,  holding agents' effort choices fixed:
\begin{align}\label{eqn_aa}
\max_{\pi, a(\cdot)} \mathbb{E} \left[\tilde{a} (\Delta \tilde{\theta}-\nu) \mid  \bm{\mu},\pi, a(\cdot)\right]+\mu_w - \la I(\pi \mid \bm{\mu}).
\end{align}
In the above expression,  the term $\nu$ represents the Lagrange multiplier associated with constraint (\ref{eqn_quota}). It equals zero if agents exert the same level of
effort,  and so the baseline equilibrium is impartial and automatically satisfies
(\ref{eqn_quota}); it is is strictly positive if $\mu_m>\mu_w$,  and so 
(\ref{eqn_quota}) is binding from above; finally it is strictly negative if $\mu_m<\mu_w$, and so (\ref{eqn_quota}) is binding from below. 

Clearly, the use of quota eliminates the discriminatory equilibria of the baseline model without impacting on the impartial equilibria. To show that it does not generate new equilibria (which must induce different levels of effort from the agents because otherwise we are in the impartial case), we provide a partial characterization of the solution to (\ref{eqn_aa}) for any $\nu>0$. 
\begin{lemma}\label{lem_aa}
Fix any $\nu>0$. In the case where the solution to (\ref{eqn_aa}) satisfies (\ref{eqn_quota}), it must also satisfy $\pi(0)<1/2$ and $X>Y>0$. 
\end{lemma}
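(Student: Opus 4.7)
The approach is to view (\ref{eqn_aa}) as a standard rational inattention problem with binary actions and state-contingent utilities $u(m,\Delta\theta)=\Delta\theta-\nu$ and $u(w,\Delta\theta)=0$, and apply the Matějka--McKay logit representation (the same machinery invoked in Lemma~\ref{lem_optimalsignal}). The hypothesis that the solution satisfies (\ref{eqn_quota}) forces $\bar\pi=1/2\in(0,1)$, so both actions carry strictly positive unconditional weight; in particular the solution is interior and the logit formula applies. Substituting $\bar\pi=1/2$ I would obtain the closed form
\[
\pi(\Delta\theta)=\sigma\!\left(\tfrac{\Delta\theta-\nu}{\lambda}\right),\qquad \sigma(x)\coloneqq\tfrac{1}{1+e^{-x}}.
\]

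With this closed form, two of the three claims are immediate. Since $\nu>0$ and $\sigma$ is strictly increasing with $\sigma(0)=1/2$, we get $\pi(0)=\sigma(-\nu/\lambda)<1/2$; likewise $Y=\pi(0)-\pi(-1)=\sigma(-\nu/\lambda)-\sigma(-(1+\nu)/\lambda)>0$. The remaining inequality $X>Y$ is the substantive step: it is equivalent to $\pi(1)+\pi(-1)>2\pi(0)$, i.e.\ to showing
\[
\sigma(u+\delta)+\sigma(u-\delta)-2\sigma(u)>0
\]
for $u\coloneqq -\nu/\lambda<0$ and $\delta\coloneqq 1/\lambda>0$.

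To dispatch this inequality I would set $\alpha=e^{u}\in(0,1)$ and $\beta=e^{\delta}>1$, put the three sigmoids over a common denominator, and simplify. Routine algebra should collapse the numerator to a clean factorization, giving
\[
\sigma(u+\delta)+\sigma(u-\delta)-2\sigma(u)=\frac{\alpha\,(\beta-1)^{2}(1-\alpha)}{(1+\alpha\beta)(\alpha+\beta)(1+\alpha)},
\]
whose sign is transparently positive because $\alpha\in(0,1)$ and $\beta>1$. The only obstacle is pure bookkeeping in the expansion; the payoff is the factor $(1-\alpha)$, which isolates the role of the quota wedge $\nu>0$ and makes it clear why the inequality would \emph{reverse} if $\nu$ were negative. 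An equivalent, more conceptual route would be to recognize the expression as a second finite difference of the logistic CDF and use the antisymmetry $\sigma(-x)=1-\sigma(x)$ together with unimodality of $\sigma'$ about $0$ to pin down its sign from the location of $u$ relative to $0$; I expect the algebraic route to be the crispest for the paper.
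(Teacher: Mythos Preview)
Your proposal is correct and follows essentially the same route as the paper: recognize (\ref{eqn_aa}) as a Mat\v{e}jka--McKay problem with shifted payoff $\Delta\theta-\nu$, use $\bar\pi=1/2$ to pin down the logit form $\pi(\Delta\theta)=\sigma((\Delta\theta-\nu)/\lambda)$, and read off the inequalities. The only cosmetic difference is in the last step: the paper computes the ratio $X/Y=\dfrac{\exp((\nu+1)/\lambda)+1}{\exp(1/\lambda)+\exp(\nu/\lambda)}$ and invokes convexity of the exponential, whereas you compute the second difference and factor it as $\dfrac{\alpha(1-\alpha)(\beta-1)^{2}}{(1+\alpha\beta)(\alpha+\beta)(1+\alpha)}$; both arguments are valid and of comparable length.
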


Lemma \ref{lem_aa} shows that if the principal faces a subsidy for hiring $w$ and happens to promote the agents with equal probability on average,  then he must treat $w$ more favorably unless $m$ is strictly more productive.  Yet such a screening strategy cannot induce $m$ to work and $w$ to shirk (which is needed for $\nu$ to be positive).  \begin{figure}[ht!]
\begin{center}
\scalebox{1}{
\begin{tikzpicture}[scale=.9]
\draw[->] (0,0)--(5.9,0);
\draw[->] (0,0)--(0,5.9);
\node[above, font=\scriptsize] at (0,5.9){$Y$};
\node[right, font=\scriptsize] at (5.9,0){$X$};
\draw[dashed](0,0)--(5,5); 
\node[right, font=\scriptsize] at (0.2,0.2){$45^{\circ}$};
\node[right, font=\scriptsize] at (-.1, 3.6){ $IC_{m}\wedge IC_{w}$};
\node[right, font=\scriptsize] at (3.5,2){ $IC_{m}$};
\node[right, font=\scriptsize] at (1.3,0.5){ $IC_{w}$};
\draw[fill=red,opacity=0.1]  (0,0)  -- (3.07,0) -- (2,2) -- cycle;
\draw[fill=green,opacity=0.1]  (5,5) -- (5,0) -- (2,2) -- cycle;
\draw[fill=black,opacity=0.1]  (2,2) -- (0,3.333) -- (0,5.714) -- cycle;

 \draw[-,blue,thick] (0, 3.333)--(5,0);
 \draw[-,black,thick] (0, 5.714)--(3.07,0);
 \draw[->] (.8, 5.2)--(.5,5);
 \node[black,  font=\scriptsize] at (2.5,  5.3){$\bar\mu X +(1-\bar\mu)Y=c$};
 \draw[->] (4.6,0.6)--(4.4,0.45);
 \node[right, font=\scriptsize] at (4.5,0.5){ $(1-\ul\mu) X +\ul \mu Y = c$};
\end{tikzpicture}
}
\centering
\caption{Under the assumption that $\bar\mu+\ul\mu>1$, a signal structure with $X>Y>0$ cannot satisfy both agents' IC constraints at $\bm\mu=(\bar\mu, \ul\mu)$. }\label{figure4}
\end{center}
\vspace{-15pt}
\end{figure}
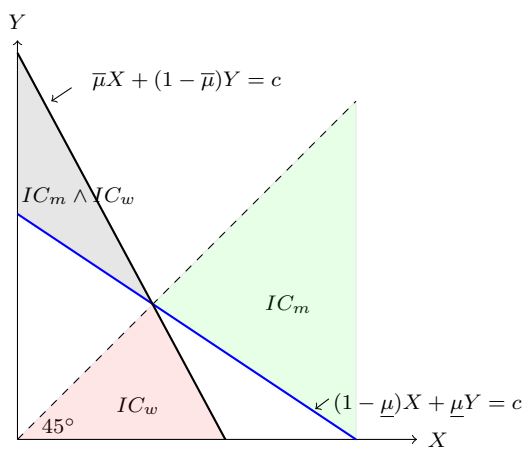
This is illustrated by Figure \ref{figure4},  which gathers all $(X,Y)$ profiles that satisfy both agents' incentive compatibility constraints in the grey area. Under the assumption that $
\bar\mu+\ul\mu>1$, the grey area lies above the 45 degree line and so must not contain the optimal signal structure. The latter is shown to satisfy $X>Y$ and so must lie below the 45 degree line --- an area in which satisfying one agent's incentive compatibility constraint would necessarily violate the incentive compatibility constraint of the other agent (the pink and green areas are disjoint).  The proof for the case where $m$ shirks and $w$ works  (and so $\nu<0$) is analogous and so is omitted.

\section{Related literature}\label{sec_literature}

\paragraph{Statistical discrimination.} 
The literature on statistical discrimination is vast and would be impossible to exhaust here (see the surveys by \citealt{fang2011theories} and \citealt{onuchic2022}). Our work provides a new foundation for the discriminatory information structure assumed by Phelpsian models of statistical discrimination. Recent studies by \cite{chambers2021characterisation}, \cite{escude2022}, and \cite{deb2022wage} examine Phelpsian statistical discrimination from the angle of information design but do not endogenize agents' investment decisions. 

\cite{coateloury1993} develop an Arrovian model of statistical discrimination
with an exogenous, symmetric signal of workers' skills. Like in other traditional Arrovian models, the authors  rely on the complementarity between effort and belief to generate multiple equilibria, and interpret the coexistence of a Pareto-dominant, ``good equilibrium'' and a Pareto-dominated, ``bad equilibrium'' as discrimination across groups. In contrast, our model generates discrimination through the complementarity between effort asymmetry and attention asymmetry. This is achieved by using tournament as the incentive scheme and letting rational inattention turn the competition between agents into a competition for the principal's attention. When the attention cost is intermediate, effort asymmetry and attention asymmetry become mutually reinforcing and can sometimes sustain discrimination as the most profitable equilibrium. To single out the mechanism of our interest, we use regularity conditions to ensure that when attention is (almost) costless, our game has a unique, high-effort equilibrium, hence the traditional Arrovian mechansim is absent from our model.  \label{pagecoateloury}

 The role of information acquisition in shaping discrimination has recently been explored by \cite{bartovs2016attention} and \cite{fosgerau2021equilibrium}. The model of \cite{bartovs2016attention} takes as given the exogenous differences between groups, as well as employers' default decisions regarding whether to accept or reject minorities  absent information acquisition (lemon dropping vs. cherry-picking). Here, workers' investment decisions and the employer's choice of signal structure are mutually enforcing. This additional source of endogeneity raises the possibility of sustaining discrimination among ex-ante identical workers, and predicts a nonmonotonic relation between the equilibrium degree of discrimination and the cost of information acquisition.\footnote{The last result also separates our analysis from earlier works that combine exogenous, asymmetric signals with endogenous Arrovian investments \citep{borjas1978biased,lundberg1983private}.  }

\cite{fosgerau2021equilibrium} study an Arrovian model where a screener incurs a general posterior-separable attention cost to acquire information about job candidates.  Since candidates are screened on an individual basis, discrimination cannot generically arise as the most preferred equilibrium by the screener for reasons discussed in Section \ref{sec:implications}. The focus is instead on how rational inattention interacts with natural, intrinsic, differences between groups such as prejudice and asymmetric access to social capital.

\vspace{-10pt}
\paragraph{Rational inattention.} The literature on rational inattention (RI) pioneered by \cite{sims2003implications} has grown substantially in recent years; see 
\cite{mackowiak2023rational} for a survey. We use the ideas and techniques developed
in this literature to study statistical discrimination. Conceptually,  our results exploit the flexibility associated with RI information acquisition. The link between attentional flexibility and discrimination has long been recognized and documented by psychologists \citep{eberhardt2020biased} and more recently by economists \citep{bartovs2016attention, huang2022, glover2017discrimination}.\footnote{In other areas of economics, attentional flexibility has proven crucial for shaping the outcomes of financial contracting, political competition, and ultimatum bargaining \citep{yang2020optimality, hu2019politics, ravid2020ultimatum}. Its  empirical relevance has been established by \cite{dean2023experimental}.} Technically, 
\cite{matvejka2015rational}  and \cite{yang2020optimality}  provide a complete characterization of the optimal signal structure for binary decision problems, while \cite{matveenko2021attentional} study how imposing quotas on the average decision probabilities affects the solution to the RI decision problem studied by \cite{matvejka2015rational}.  Our analysis of the principal's problem builds on their results.

\vspace{-10pt}

\paragraph{Incentive contracting.} Since \cite{alchian1972production}, there has been a long tradition of studying the role of monitoring cost in shaping the organization of principal-agent relationships.  \cite{liyang} examine the problem faced by a rationally inattentive principal who can simultaneously design the monitoring technology and incentive scheme as a package, while restricting attention to partitional monitoring technologies. Here the incentive scheme is taken as exogenously given but the monitoring technology is unrestricted. 
    
The theory of contests has been used to inform affirmative action policies that level the playing field for heterogeneous participants \citep{chowdhury2020heterogeneity}, while recent works by \cite{drugov2017biased} and \cite{fu2020optimal} show that the optimal contest between symmetric agents can be biased in sufficiently general environments. We focus on the role of RI in biasing the equilibrium signal structure and investment decisions, embedding the analysis in a simple contest game between symmetric agents. 

\appendix

\section{Proofs}\label{sec_proof}

\subsection{Preliminaries}\label{sec_proof_preliminary}
\paragraph{Notations.} Throughout this appendix, we follow the notational conventions developed in the main text. Specifically, we use $\bm{\mu}\coloneqq (\mu_m, \mu_w)$ denote the profile of effort choices by $m$ and $w$,  and $\Delta\theta \in \{-1,0,1\}$ to denote the differential productivity between them. For any signal structure $\pi$, we write $X$ for $\pi(1)-\pi(0)$ and $Y$ for $\pi(0)-\pi(-1)$. Finally, recall the following definitions: $\Delta \mu \coloneqq \bar\mu-\ul\mu$, $c \coloneqq C/\Delta\mu$, $A\coloneqq \bar\mu(1-\ul\mu)$, $B \coloneqq \ul\mu(1-\bar\mu)$, and $\gamma \coloneqq \exp(1/\lambda)$. Note that $A>B$, and that $\gamma$ is decreasing in $\lambda$ and satisfies $\gamma \rightarrow +\infty$ as $\lambda\rightarrow 0$ and $\gamma \rightarrow 1$ as $\lambda \rightarrow +\infty$. 

In addition, define $h(x)\coloneqq x\ln x+ (1-x) \ln (1-x)$ $\forall x \in [0,1]$ and note that $-h$ is the binary entropy function. $h$ equals zero at $x=0$ and $x=1$; it is decreasing on $[0,1/2]$ and is increasing on $[1/2,1]$. 

\vspace{-8pt}

\paragraph{Binary RI decision problem.} Fix any effort profile $\bm{\mu} \in \{\ul\mu, \bar\mu\}^2$, and let $p(\Delta \theta)$ denote the probability that $\Delta \theta$ occurs under $\bm{\mu}$. For any signal structure $\pi$, $\bar\pi \coloneqq \sum_{\Delta \theta \in \{-1,0,1\}}p(\Delta \theta)\pi(\Delta\theta)$ is the average probability that $\pi$ recommends $m$ for promotion, and $1-\bar\pi$ is the average probability that it recommends $w$ for promotion. The expected revenue generated by $\pi$ equals: \[\sum_{\Delta \theta \in \{-1,0,1\}} p(\Delta \theta)\pi(\Delta \theta)\Delta\theta + \mu_w,\]
and the mutual information cost incurred by $\pi$ equals: \[\sum_{\Delta \theta \in \{-1,0,1\}} p(\Delta \theta)h(\pi(\Delta\theta))-h(\bar\pi).\] 

The principal problem: (\ref{eqn_mainproblem}), is to choose a signal structure that maximizes the expected revenue minus $\lambda$ times the mutual information cost. By Proposition 1 of \cite{yang2020optimality},  the solution to this problem --- which we denote simply by $\pi$ --- uniquely exists and satisfies $\pi(\Delta \theta) \equiv 1$  if $\mathbb{E}[\exp(-\Delta \ta/\la) \mid \bm{\mu}] \leq 1$, $\pi (\Delta \theta ) \equiv 0$ if $\mathbb{E}[\exp(\Delta \ta/\la) \mid \bm{\mu}] \leq 1$, and $\pi(\Delta \theta) \in (0,1)$
 $\forall \Delta \theta$ otherwise. Simplifying the last condition yields: $\forall \Delta \theta \in \{-1,0,1\}$, 
\begin{equation}\label{eqn_yangcondition}
  \pi(\Delta\ta) 
  \begin{cases}
    =  1 & \text{ if } p(1)/p(-1) \geq \gamma, \\
    =0 & \text{ if } p(1)/p(-1) \leq 1/\gamma, \\
    \in (0,1) & \text{ else}.  
  \end{cases}
  \end{equation}
In what follows, we say that $\pi$ is degenerate in the first two cases, and that it is nondegenerate in the last case. 

When nondegenerate, $\pi$ satisfies the multinomial logit formula prescribed by Theorem 1 of \cite{matvejka2015rational}: 
\begin{equation}\label{eqn_logit}
\pi(\Delta \theta)=\frac{\bar{\pi} \exp(\Delta \theta/\lambda)}{\bar\pi \exp(\Delta \theta/\lambda)+1-\bar\pi} \quad \forall \Delta \theta,
\end{equation}  
and $\bar\pi$ must satisfy Bayes' plausibility: 
\begin{equation}\label{eqn_pibar}
\sum_{\Delta \theta \in \{-1, 0, 1\}} p(\Delta \theta) \pi(\Delta \theta)=\bar{\pi}. 
\end{equation}
(\ref{eqn_yangcondition}), (\ref{eqn_logit}), and (\ref{eqn_pibar}) together fully pin down $\pi$. 


\subsection{Useful lemmas and their proofs}\label{sec_proof_lemma}
\paragraph{Proof of Lemma \ref{lem_optimalsignal}.}  Part (i): When $\bm{\mu}=(\bar\mu, \bar\mu)$, we have $p(1)=p(-1)=\bar{\mu} (1-\bar{\mu})$ and so  $p(1)/p(-1)=1 \in (1/\gamma, \gamma)$. Consequently, $\pi$ is always nondegenerate and is fully pinned down by (\ref{eqn_logit}) and (\ref{eqn_pibar}). Solving $\pi$ explicitly yields: 
\[\bar\pi=\pi(0)=\frac{1}{2} \text{ and } X=Y=g(\gamma) \coloneqq \frac{\gamma-1}{2(\gamma+1)},\]
where the function $g:[1, +\infty) \rightarrow \mathbb{R}$ satisfies $g>0$ and $g'>0$ $\forall \gamma>1$. Since $\gamma \coloneqq \exp(1/\lambda)$ is decreasing in $\lambda$, the last result can be expressed as $dg(\gamma)/d\lambda<0$ $\forall \lambda>0$. The proof for the case of  $\bm{\mu}=(\ul\mu, \ul\mu)$ is analogous and so is omitted. 
\vspace{8pt}

\noindent Part (ii): When $\bm{\mu}=(\bar \mu, \ul \mu)$, we have $p(1)=A$ and $p(-1)= B$, hence $p(1)/p(-1)=A/B>1$. Consequently, $p(1)/p(-1)<1/\gamma$ can never happen, while $p(1)/p(-1) \geq \gamma$ holds if and only if \[\gamma \leq \breve{\gamma}\coloneqq \frac{A}{B}, \text{ or equivalently } \lambda \geq \breve{\lambda}\coloneqq (\ln\breve{\gamma})^{-1}.\] For all $\lambda<\breve{\lambda}$, $\pi$ is nondegenerate and is fully pinned down by (\ref{eqn_logit}) and (\ref{eqn_pibar}). Solving $\pi$ explicitly yields:
\[\bar\pi =\pi(0)=\frac{\gamma A- B}{(\gamma-1)(A+B)}, \text{ }  X=f(\gamma)\coloneqq \frac{(\gamma A-B)(\gamma B-A)}{(\gamma^2-1)(A+B)A}, \text{ and } Y=\frac{A}{B}f(\gamma),\]
where the function $f: [\breve{\gamma},+\infty) \rightarrow \mathbb{R}$ satisfies $f>0$ and $f'>0$ $\forall \gamma>\breve\gamma$ (equivalently, $df(\gamma)/d\lambda<0$ $\forall \lambda<\breve{\lambda}$). The proof for the case of $\bm{\mu}=(\ul\mu, \bar\mu)$ is analogous and so is omitted.  \qed
\vspace{-8pt}
\paragraph{Proof of Lemma \ref{lem_optimaleffort}.}
For any given $\mu_w$ and $\pi$, $m$ prefers to exert high effort rather than low effort if and only if 
\begin{multline*}
\bar\mu(1-\mu_w)\pi(1)+(1-\bar\mu)\mu_w \pi (-1)+[1-\bar\mu(1-\mu_w)-(1-\bar\mu)\mu]\pi(0)-C\\
\geq \ul\mu(1-\mu_w)\pi(1)+(1-\ul\mu)\mu_w \pi (-1)+[1-\ul\mu(1-\mu_w)-(1-\ul\mu)\mu]\pi(0),
\end{multline*}
or, equivalently,  
\[(1-\mu_w)X+\mu_w Y \geq c\coloneqq \frac{C}{\Delta \mu}.  \]
Likewise, $w$ prefers to exert high effort rather than low effort if and only if 
\begin{multline*}
(1-\bar\mu)\mu_m\pi(1)+\bar\mu(1-\mu_m) \pi (-1)+[1-(1-\bar\mu)\mu_m-\bar\mu(1-\mu_m)]\pi(0)-C\\
\geq (1-\ul\mu)\mu_m\pi(1)+\ul\mu(1-\mu_m) \pi (-1)+[1-(1-\ul\mu)\mu_m-\ul\mu(1-\mu_m)]\pi(0),
\end{multline*}
or, equivalently,  \[\mu_m X+(1-\mu_m)Y \geq c.\qed \]

\begin{lemma}\label{lem_algebra}
Let $V(\bm{\mu};\gamma)$ and $I(\bm{\mu};\gamma)$ denote the expected revenue and the mutual information cost generated by the optimal signal structure for $\bm{\mu}$, respectively, when the attention cost parameter is $(\ln\gamma)^{-1}$.  
Then 
\begin{align*}
&V((\bar\mu, \bar{\mu});\gamma)=\bar \mu+\bar\mu(1-\bar\mu)\frac{\gamma-1}{\gamma+1},  \text{  } V((\bar{\mu}, \ul\mu);\gamma)=\ul{\mu}+\frac{\gamma A-B}{\gamma+1}, \\
&V((\ul\mu, \ul{\mu});\gamma)=\ul\mu+\ul\mu(1-\ul \mu) \frac{\gamma-1}{\gamma+1}, \text{ } V((\bar\mu, \bar\mu);\gamma)-V((\bar\mu, \ul\mu);\gamma)=\frac{\Delta \mu}{\gamma+1}[\gamma-(\gamma-1)\bar\mu],\\
&V((\bar\mu, \ul\mu);\gamma)-V((\ul\mu, \ul\mu);\gamma)=\frac{\Delta \mu}{\gamma+1}[\gamma-(\gamma-1)\ul\mu],\\
&\frac{d}{d\gamma}V((\bar\mu, \bar\mu);\gamma)-V((\bar \mu, \ul \mu); \gamma)=\frac{\Delta \mu (1-2\bar\mu)}{(\gamma+1)^2},  \text{ and}\\
&\frac{d}{d\gamma}V((\bar\mu, \ul\mu);\gamma)-V((\ul\mu, \ul \mu); \gamma)=\frac{\Delta \mu (1-2\ul\mu)}{(\gamma+1)^2}, 
\end{align*}
and 
\begin{align*}
&I((\bar\mu, \bar{\mu});\gamma)=2\bar\mu(1-\bar\mu)[h\left(\frac{\gamma}{\gamma+1}\right)-h\left(\frac{1}{2}\right)],\\
&I((\bar{\mu}, \ul\mu);\gamma)=A h\left(\frac{\gamma(\gamma A-B)}{(\gamma^2-1)A}\right)+Bh\left(\frac{\gamma A-B}{(\gamma^2-1)B}\right)-(A+B)h\left(\frac{\gamma A-B}{(\gamma-1)(A+B)}\right),\\
&I((\ul\mu, \ul{\mu});\gamma)=2\ul\mu(1-\ul\mu)[h\left(\frac{\gamma}{\gamma+1}\right)-h\left(\frac{1}{2}\right)], \\
&\frac{d}{d\gamma} I((\bar\mu, \bar\mu);\gamma)-I((\bar \mu, \ul \mu); \gamma)=\frac{\Delta \mu (1-2\bar\mu)\ln \gamma}{(\gamma+1)^2},\text{and } \\
&\frac{d}{d\gamma} I((\bar\mu, \ul\mu);\gamma)-I((\ul \mu, \ul \mu); \gamma)=\frac{\Delta \mu (1-2\ul\mu)\ln \gamma}{(\gamma+1)^2}.
\end{align*}
\end{lemma}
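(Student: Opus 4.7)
My plan is to handle this collection of identities by direct substitution from Lemma~\ref{lem_optimalsignal}, and then to use a single envelope observation that reduces the derivative calculations for $I$ to those for $V$.

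First I would invoke Lemma~\ref{lem_optimalsignal} to pin down $\pi(\Delta\theta)$ for each of the three effort profiles, together with the prior probabilities $p(\Delta\theta)$ computed from independent Bernoulli productivities (so $p(1)=\mu_m(1-\mu_w)$, $p(-1)=(1-\mu_m)\mu_w$, $p(0)=1-p(1)-p(-1)$). I would then plug these into $V(\bm\mu;\gamma)=\mu_w+\sum_{\Delta\theta}p(\Delta\theta)\pi(\Delta\theta)\Delta\theta$ and into the mutual-information formula $I=-h(\bar\pi)+\sum_{\Delta\theta}p(\Delta\theta)\,h(\pi(\Delta\theta))$, and simplify. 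The symmetric profiles $(\bar\mu,\bar\mu)$ and $(\ul\mu,\ul\mu)$ are immediate from $\pi(0)=1/2$ and $X=Y=g(\gamma)$. The asymmetric profile $(\bar\mu,\ul\mu)$ uses $\bar\pi=\pi(0)=(\gamma A-B)/[(\gamma-1)(A+B)]$, $X=f(\gamma)$, $Y=(A/B)f(\gamma)$ from Lemma~\ref{lem_optimalsignal}, and after grouping the $(\gamma-1)(A+B)$ denominators one recovers the compact form $V((\bar\mu,\ul\mu);\gamma)=\ul\mu+(\gamma A-B)/(\gamma+1)$ and the stated $I((\bar\mu,\ul\mu);\gamma)$.

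Once the three $V$ expressions are in hand, the two difference identities are elementary algebra, and differentiating each difference (a rational function of $\gamma$) via the quotient rule yields the claimed $\Delta\mu(1-2\bar\mu)/(\gamma+1)^2$ and $\Delta\mu(1-2\ul\mu)/(\gamma+1)^2$. For the corresponding $I$-derivatives I would avoid differentiating the heavy log expressions and instead use an envelope argument. Because the optimal signal structure solves $\max_\pi[V(\pi,\bm\mu)-\lambda I(\pi,\bm\mu)]$, the envelope theorem applied to $\Phi(\lambda,\bm\mu):=V(\bm\mu;\gamma)-\lambda I(\bm\mu;\gamma)$ gives $\partial\Phi/\partial\lambda=-I(\bm\mu;\gamma)$, which rearranges to $dV/d\lambda=\lambda\,dI/d\lambda$. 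Substituting $\gamma=e^{1/\lambda}$ (so $\lambda=1/\ln\gamma$) produces the pointwise identity $dI/d\gamma=\ln\gamma\cdot dV/d\gamma$, valid for each profile throughout the nondegenerate range guaranteed by Lemma~\ref{lem_optimalsignal}. Applied to the two $V$-difference derivatives already computed, this immediately gives the $\ln\gamma$-weighted $I$-difference derivatives in the statement.

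The main obstacle is the bookkeeping in the asymmetric case: carrying $\gamma$, $A$, $B$ through the expressions for $\pi(1)$, $\pi(-1)$, and $\bar\pi$ while collapsing $V((\bar\mu,\ul\mu);\gamma)$ and especially $I((\bar\mu,\ul\mu);\gamma)$ to the stated closed form is straightforward but requires patience. The envelope reduction is what makes the lemma clean — it bypasses what would otherwise be the most painful step, direct differentiation of the $I$ differences, turning it into a one-line corollary of the $V$ computations.
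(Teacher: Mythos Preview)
Your proposal is correct. For the $V$ and $I$ expressions and the $V$-difference derivatives, you and the paper proceed identically: substitute the optimal signal structure from Lemma~\ref{lem_optimalsignal} and simplify.

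Where you diverge is in computing $\tfrac{d}{d\gamma}[I(\bm\mu;\gamma)-I(\bm\mu';\gamma)]$. The paper obtains the three individual derivatives $\tfrac{d}{d\gamma}I(\bm\mu;\gamma)=\tfrac{2\bar\mu(1-\bar\mu)\ln\gamma}{(\gamma+1)^2}$, $\tfrac{(A+B)\ln\gamma}{(\gamma+1)^2}$, $\tfrac{2\ul\mu(1-\ul\mu)\ln\gamma}{(\gamma+1)^2}$ by direct differentiation of the entropy expressions (``lengthy algebra, available upon request'') and then takes differences. Your envelope route---differentiating $\Phi(\lambda)=V-\lambda I$ to get $dV/d\lambda=\lambda\, dI/d\lambda$ and hence $dI/d\gamma=\ln\gamma\cdot dV/d\gamma$---is a genuine shortcut: it converts three messy log-differentiations into a one-line consequence of the already-computed $dV/d\gamma$. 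One can check that applying your identity profile-by-profile reproduces exactly the paper's intermediate step, so nothing is lost. The paper's approach is self-contained and avoids invoking any regularity for the envelope theorem (which is harmless here, since Lemma~\ref{lem_optimalsignal} delivers a unique, smoothly varying maximizer on the nondegenerate range); yours is cleaner and explains \emph{why} the $\ln\gamma$ factor appears uniformly across profiles.
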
 

\begin{proof}
When proving Lemma \ref{lem_optimalsignal}, we solved for the optimal signal structure for any given $\bm{\mu}$. Substituting the solutions into the expressions for $V(\cdot;\gamma)$ and $I(\cdot;\gamma)$ gives the results concerning $V$ and $I$, as well as their derivatives w.r.t. $\gamma$. $dV(\cdot; \gamma)/d\gamma$ is easy to calculate. Meanwhile, 
\[\frac{d}{d\gamma}I((\bar\mu, \bar\mu); \gamma)=\frac{2\bar\mu(1-\bar\mu)\ln \gamma}{(\gamma+1)^2}, \text{ } \frac{d}{d\gamma}I((\ul\mu, \ul\mu); \gamma)=\frac{2\ul\mu(1-\ul\mu)\ln \gamma}{(\gamma+1)^2},\]
\[\text{ and } \frac{d}{d\gamma}I((\bar\mu, \ul\mu); \gamma)=\frac{(A+B)\ln \gamma}{(\gamma+1)^2}.\]
The first two derivatives are easy to calculate, while the last one requires a bit of work. 

For starters, recall that the optimal signal structure for $(\bar\mu,\ul\mu)$ satisfies:
\begin{align*}
&\bar\pi=\frac{\gamma A-B}{(\gamma-1)(A+B)}, \text{ } 1-\bar\pi=\frac{\gamma B}{(\gamma-1)(A+B)}, \text{ and } \frac{\bar\pi}{1-\bar\pi}=\frac{\gamma A-B}{\gamma B-A};\\
& \pi(1)=\frac{\gamma(\gamma A-B)}{(\gamma^2-1)A}, \text{ } 1-\pi(1)=\frac{\gamma B-A}{(\gamma^2-1)A}, \text{ and } \frac{\pi(1)}{1-\pi(1)}=\frac{\gamma(\gamma A-B)}{\gamma B-A};\\
\text{ and } & \pi(-1)=\frac{\gamma A-B}{(\gamma^2-1)B}, \text{ } 1-\pi(-1)=\frac{\gamma (\gamma B-A)}{(\gamma^2-1)B}, \text{ and } \frac{\pi(-1)}{1-\pi(-1)}=\frac{\gamma A-B}{\gamma(\gamma B-A)}. 
\end{align*}
Differentiating $\bar\pi$, $\pi(1)$, and $\pi(-1)$ w.r.t. $\gamma$ yields:
\[\frac{d\bar\pi}{d\gamma}=\frac{-(A-B)}{(A+B)(\gamma-1)^2}, \text{ }\frac{d\pi(1)}{d\gamma}=\frac{B(\gamma^2+1)-2A\gamma}{A(\gamma^2-1)^2},\text{ and } \frac{d\pi(-1)}{d\gamma}=\frac{-A(\gamma^2+1)+2B\gamma}{B(\gamma^2-1)^2}. \]
Based on these results, alongside with $h'(x)=\ln\left(\frac{x}{1-x}\right)$,  we obtain that  
\begin{align*}
&\frac{d I((\bar\mu, \ul\mu);\gamma)}{d\gamma}\\
&=A\ln\left(\frac{\pi(1)}{1-\pi(1)}\right)\frac{d\pi(1)}{d\gamma}+B\ln\left(\frac{\pi(-1)}{1-\pi(-1)}\right)\frac{d\pi(-1)}{d\gamma}-(A+B)\ln\left(\frac{\bar\pi}{1-\bar\pi}\right)\frac{d\bar\pi}{d\gamma}\\
&=A \frac{B(\gamma^2+1)-2A\gamma}{A(\gamma^2-1)^2}\ln\left(\frac{\gamma (\gamma A-B)}{\gamma B-A}\right)+B \frac{-A(\gamma^2+1)+2B\gamma}{B(\gamma^2-1)^2}\ln\left(\frac{\gamma A-B}{\gamma (\gamma B-A)}\right)\\
& \quad -(A+B)\frac{-1}{(\gamma-1)^2}\frac{A-B}{A+B}\ln\left(\frac{\gamma A-B}{\gamma B-A}\right)\\
&=\left[\frac{B(\gamma^2+1)-2A\gamma -A(\gamma^2+1)+2B\gamma}{(\gamma^2-1)^2} +\frac{A-B}{(\gamma-1)^2}\right]\ln \left(\frac{\gamma A-B}{\gamma B-A}\right)\\
& \quad + \frac{B(\gamma^2+1)-2A\gamma+A(\gamma^2+1)-2B\gamma}{(\gamma^2-1)^2}\ln \gamma =0+\frac{(A+B)\ln\gamma}{(\gamma+1)^2},
\end{align*}
as desired. 
\end{proof}

\vspace{-8pt}

\paragraph{Proof of Lemma \ref{lem_costrank}.} By Lemma \ref{lem_algebra}, the following holds for all $\gamma>1$:
\[
I((\bar\mu, \bar{\mu});\gamma)-I((\ul\mu, \ul{\mu});\gamma)=-2\Delta\mu(\bar\mu+\ul\mu-1)[h\left(\frac{\gamma}{\gamma+1}\right)-h\left(\frac{1}{2}\right)]< 0 , 
\]
where the last inequality uses Part (i) of Assumption \ref{assm_regularity}: $\bar\mu+\ul\mu>1$,  and the fact that $h$ is increasing on $[1/2,1]$. Consequently, the mutual information cost incurred by the high effort profile is smaller than that incurred by the low effort profile. 

If, in addition, $\ul\mu>1/2$ (as required by Theorems \ref{thm_existenceuniqueness}(iii) and \ref{thm_mostprofitable}), then 
\[\frac{d}{d\gamma} I ((\bar\mu, \ul\mu); \gamma)-I ((\ul\mu, \ul\mu); \gamma)=\frac{\Delta\mu (1-2\ul\mu)\ln\gamma}{(\gamma+1)^2}<0 \text{ }\forall \gamma \in [\breve{\gamma}, +\infty). \]
Meanwhile,
\[I ((\bar\mu, \ul\mu); \breve{\gamma})-I ((\ul\mu, \ul\mu); \breve{\gamma})=0-2\ul\mu(1-\ul\mu)[h\left(\frac{\breve\gamma}{\breve\gamma+1}\right)-h\left(\frac{1}{2}\right)]<0,\]
where the equality exploits the definition of $\breve{\gamma}$, and the inequality exploits the fact that $\breve\gamma>1$ and $h$ is increasing on $[1/2,1]$. Combining these results, we obtain that 
 $I ((\bar\mu, \ul\mu); \gamma)<I ((\ul\mu, \ul\mu); \gamma)$ $\forall \gamma\geq \breve\gamma$. That is, the mutual information cost incurred by the discriminatory effort profile is smaller than that incurred by the low effort profile whenever the optimal discriminatory signal structure is nondegenerate. \qed 

 \vspace{-8pt}

\paragraph{Proof of Lemma \ref{lem_aa}.}  Fix any $\nu > 0$. A careful inspection reveals that problem (\ref{eqn_aa}): 
\[\max_{\pi, a(\cdot)}\mathbb{E} \left[\tilde{a} (\Delta \tilde{\theta}-\nu) \mid  \bm{\mu},\pi, a(\cdot)\right]+\mu_w- \la I(\pi \mid \bm{\mu}),
\]
is nothing but the very kind of the RI decision problem described in Appendix \ref{sec_proof_preliminary}, whereby the principal's payoff difference from choosing $m$ over $w$  is $\Delta \tilde{\theta}-\nu$ rather than $\Delta\tilde{\theta}$.  Modifying (\ref{eqn_logit}) accordingly yields: 
\[\pi(\Delta \theta)=\frac{\bar{\pi}
\exp((\Delta \theta-\nu)/\lambda)}{\bar\pi\exp((\Delta\theta-\nu)/\lambda)+1-\bar\pi} \quad \forall \Delta. \theta\]
In the case where $\bar\pi=1/2$,  the above expression simplifies to: 
\[\pi(\Delta \theta)=\frac{\exp((\Delta \theta-\nu)/\lambda)}{\exp((\Delta\theta-\nu)/\lambda)+1} \quad \forall \Delta \theta,\]
so in particular $\pi(0)<1/2$.  Further algebra shows that 
\[X=\frac{\exp(1/\la)-1}{[\exp((1-\nu)/\la)+1][\exp(\nu/\la)+1]}\text{ and } Y=\frac{\exp(\nu/\la)(\exp(1/\la)-1)}{[\exp(\nu/\la)+1][\exp((\nu+1)/\la)+1]} \]
where
\[\frac{X}{Y}=\frac{\exp((\nu+1)/\la)+1}{\exp(1/\la)+\exp(\nu/\la)}>1\]
due to the convexity of the exponential function. \qed

\subsection{Proofs of theorems and propositions}\label{sec_proof_theorem}
\paragraph{Proof of Theorem \ref{thm_existenceuniqueness}.} Under Assumption \ref{assm_regularity}(i), i.e., $\bar\mu+\ul\mu>1$, the following must hold: 
\[\frac{\bar\mu(1-\bar\mu)}{A+B}-\frac{1}{2}=\frac{\Delta \mu (1-2\bar\mu)}{2(A+B)}<0 \text{ and }\frac{\bar\mu(1-\bar\mu)}{A+B}-\frac{\ul\mu(1-\ul\mu)}{A+B}=\frac{\Delta\mu(1-\bar\mu-\ul\mu)}{A+B}<0.\]
Combining with Assumption \ref{assm_regularity}(ii) yields: 
\[c<\frac{\bar\mu(1-\bar\mu)}{A+B}<\min\{\frac{1}{2}, \frac{\ul\mu(1-\ul\mu)}{A+B}\},\]
a condition that will be invoked extensively in the upcoming proof. 

\vspace{8pt}
\noindent Part (i): Lemma \ref{lem_optimalsignal}(i) and Lemma \ref{lem_optimaleffort} together imply that $(\bar{\mu}, \bar{\mu})$ can be sustained in an equilibrium if and only if $g(\gamma) \geq c$. Since $g(1)=0$, $g'>0$ on $(1, +\infty)$, and $\lim_{\gamma \rightarrow +\infty}g(\gamma)=1/2>c$, $g(\gamma) \geq c$ holds if and only if \[\gamma \geq \gamma^* \coloneqq g^{-1}(c), \text{ or equivalently } \lambda \leq (\ln \gamma^*)^{-1} \coloneqq \lambda^*>0.\]
When the last condition fails, we have $g(\gamma)<c$ and so can sustain $(\ul\mu, \ul\mu)$ can in an equilibrium. At $\gamma=\gamma^*$ (equivalently, $\lambda=\lambda^*$), both $(\bar\mu, \bar\mu)$ and $(\ul \mu, \ul\mu)$ can be sustained in an equilibrium. 
\vspace{5pt}

\noindent Part (ii):  $(\bar\mu, \ul\mu)$ can be sustained in an equilibrium if and only the optimal signal structure for $(\bar\mu, \ul\mu)$ --- which satisfy (i) $X=f(\gamma)$ and (ii) $Y=AX/B$ --- also satisfy (iii) agents' incentive compatibility constraints: $(1-\ul\mu)X+\ul\mu Y \geq c$ and $\bar\mu X+(1-\bar\mu)Y \leq c$.  Solving (ii) and (iii) simultaneously yields $X \in [\ul X, \bar X]$, where 
\[\ul X\coloneqq \frac{c(1-\bar{\mu})}{1-\ul{\mu}} \text{ and } \bar X\coloneqq \frac{c\ul \mu}{\bar \mu}. \]
Note that $\ul X$ and $\bar X$ are both independent of $\gamma$. Moreover, $\ul{X}<B/(A+B)$ because 
\[\text{Assumption \ref{assm_regularity}} \Longrightarrow c<\frac{\ul \mu (1-\ul \mu)}{A+B} \Longrightarrow \ul{X}<\frac{B}{A+B},\]
and $\bar X<B/(A+B)$ because 
\[\bar X =\frac{c\ul\mu}{\bar\mu}<\frac{\bar\mu(1-\bar\mu)}{A+B}\frac{\ul\mu}{\bar\mu}=\frac{B}{A+B}.\]
Then from $f'>0$ $\forall \gamma \in (\breve\gamma, +\infty)$, $f(\breve\gamma)=0$, and $\lim_{\gamma \rightarrow +\infty}f(\gamma)=B/(A+B)$,  it follows that (i) holds if and only if $\gamma \in [\ul\gamma, \bar\gamma]$,
where 
\[\ul \gamma \coloneqq f^{-1}(\ul X) \text{ and }\bar \gamma\coloneqq f^{-1}(\bar{X})\]
are both finite and greather than  $\breve{\gamma}$. Define 
\[\ul \lambda \coloneqq (\ln \bar{\gamma})^{-1} \text{ and } \bar \lambda \coloneqq (\ln \ul{\gamma})^{-1},\]
and note that $0<\ul\lambda<\bar\lambda<\breve{\lambda}<+\infty$. 

It remains to show that $\ul\lambda<\lambda^*$ (equivalently, $\gamma^*<\bar\gamma$) always holds, and that $\lambda^*<\bar\lambda$ (equivalently, $\ul\gamma<\gamma^*$) holds under additional conditions. To prove the first claim, rewrite $f(\gamma)=\bar{X}$ as 
\[\varphi(\gamma)\coloneqq \frac{(\gamma A-B)(\gamma B-A)}{(\gamma^2-1)\ul\mu(1-\ul \mu)(A+B)}=c, \]
where $\varphi: [\breve\gamma, +\infty)\rightarrow  \mathbb{R}$ satisfies $\varphi(\breve\gamma)=0$ and $\varphi'>0$ $\forall \gamma>\breve\gamma$. Consequently, 
$\bar \gamma$ is the unique root of $\varphi(\gamma)=c$, while $\gamma^*$ is the unique root of $g(\gamma)=c$, with $g:[1, +\infty) \rightarrow \mathbb{R}$ satisfying $g(1)=0$ and $g'>0$ $\forall \gamma>1$. Tedious algebra shows that 
\[
\frac{d}{d\gamma}\frac{\varphi(\gamma)}{g(\gamma)}=\frac{2(A-B)^2(\gamma+1)}{\ul\mu(1-\ul \mu)(A+B)(\gamma-1)^3}>0
\]
and that 
\[
\lim_{\gamma \rightarrow +\infty}\varphi(\gamma)=\frac{\bar \mu (1-\bar \mu)}{A+B}<\frac{1}{2}=\lim_{\gamma \rightarrow +\infty} g(\gamma).\]
Therefore, $\varphi(\gamma)<g(\gamma)$ $\forall \gamma \in [\breve\gamma, +\infty)$, hence $\gamma^*<\bar{\gamma}$ must hold.

To pin down the conditions for $\ul\gamma<\gamma^*$ to hold, rewrite $f(\gamma)=\ul X$ as
\[\psi(\gamma)\coloneqq \frac{\ul\mu(1-\ul\mu)}{\bar\mu(1-\bar\mu)}\varphi(\gamma)=c,\]
and $\ul\gamma$ as the unique root of $\psi(\gamma)=c$. From the above derivation, we deduce that 
\[\frac{d}{d\gamma}\frac{\psi(\gamma)}{g(\gamma)}>0\]
and that 
\[\lim_{\gamma \rightarrow +\infty}\psi(\gamma)-\lim_{\gamma \rightarrow +\infty} g(\gamma)=\frac{\ul\mu(1-\ul\mu)}{A+B}-\frac{1}{2}=\frac{\Delta \mu (2\ul\mu-1)}{2(A+B)}.\]
Thus $\gamma^*>\ul\gamma$ if and only if
\begin{equation}\label{eqn_regularity}
\ul\mu>\frac{1}{2} \text{ and }c>g(\hat{\gamma}),
\end{equation}where $\hat{\gamma}$ denotes the unique root of $g(\gamma)=\psi(\gamma)$. Numerical analysis confirms that (\ref{eqn_regularity}) can hold simultaneously with Assumption \ref{assm_regularity}. \qed

\vspace{-8pt}

\paragraph{Proof of Theorem \ref{thm_mostprofitable}.} We only show that the impartial, high-effort equilibrium is more profitable than the discriminatory equilibrium on $[\ul\lambda, \lambda^*]$ (equivalently, $ [\gamma^*,\bar\gamma]$).  The other claims in the theorem follow immediately from Lemma \ref{lem_costrank}. 

Write $\Delta V(\gamma)$ for $V((\bar\mu,\bar\mu); \gamma)-V((\bar\mu, \ul\mu);\gamma)$,  $\Delta I (\gamma)$ for $I((\bar\mu,\bar\mu); \gamma)-I((\bar\mu, \ul\mu);\gamma)$, and $\Delta R(\gamma)$ for $\Delta V(\gamma)-\Delta I(\gamma)/\ln\gamma$. Consider the extended domain $[\breve{\gamma},+\infty)$ (in that it nests the relevant domain $[\gamma^*, \bar\gamma]$), where $\breve{\gamma}>1$. Since 
\[\tag{$\because \breve\gamma>1$ and $\argmin_{[0,1]} h=1/2$}\Delta I\left(\breve\gamma\right)=2\bar\mu(1-\bar\mu)[h\left(\frac{\breve\gamma}{\breve \gamma+1}\right)-h\left(\frac{1}{2}\right)]-0>0\]
and 
\[\tag{$\because \bar{\mu}>1/2$ by Assumption \ref{assm_regularity}(i)}\frac{d}{d\gamma}\Delta I(\gamma)=\frac{\Delta \mu (1-2\bar\mu)\ln \gamma}{(\gamma+1)^2}<0,\]
either $\Delta I(\gamma)>0$ $\forall \gamma \in [\breve\gamma, +\infty)$, or it single crosses the horizontal line from above at some $\tilde\gamma >\breve\gamma$. Then from 
\begin{align*}
\frac{d}{d\gamma}\Delta R(\gamma)
&=\frac{d}{d\gamma} [\Delta V(\gamma)-\frac{1}{\ln\gamma}\Delta I(\gamma)]\\
&=\frac{d\Delta V(\gamma)}{d\gamma}-\frac{1}{
\ln \gamma}\frac{d\Delta I(\gamma)}{d\gamma}+\frac{\Delta I(\gamma)}{\gamma (\ln\gamma)^2}\\
\tag{$\because $ Lemma \ref{lem_algebra}}&= \cancel{\frac{\Delta \mu (1-2\bar\mu)}{(\gamma+1)^2}}-\cancel{\frac{1}{\ln \gamma}\frac{\Delta \mu (1-2\bar\mu)\ln \gamma}{(\gamma+1)^2}}+\frac{\Delta I(\gamma)}{\gamma (\ln\gamma)^2},
\end{align*}
it follows that $\Delta R(\gamma)$ is either monotonically increasing on $[\breve\gamma, +\infty)$, or it first increases on $[\breve\gamma, \tilde{\gamma}]$ and then decreases on $(\tilde{\gamma}, +\infty)$. In both situations, we have 
\[\lim_{\gamma \rightarrow +\infty} \Delta R(\gamma)=\lim_{\gamma \rightarrow +\infty} \Delta V(\gamma)-0\cdot \lim_{\gamma\rightarrow +\infty} \Delta I(\gamma)=\Delta \mu(1-\bar\mu)-0>0.\]
Thus if $\Delta R(\breve\gamma)>0$, then $\Delta R(\gamma)>0$ $\forall \gamma \in [\breve\gamma, +\infty)$ as desired. 

To show that  $\Delta R(\breve\gamma)>0$, note that $V((\bar\mu, \ul\mu); \breve\gamma)=\ul\mu$ by Lemma \ref{lem_algebra}, and that $I(\bar\mu, \ul\mu); \breve\gamma)=0$ by the definition of $\breve{\gamma}$. Also note that $V((\bar\mu, \bar\mu); \breve\gamma)-I((\bar\mu, \bar\mu); \breve\gamma)/\ln \breve\gamma\geq \bar\mu$, where $\bar\mu$ is the expected profit  generated by $(\bar\mu, \bar\mu)$ if the principal uses a degenerate signal structure that recommends $m$ for promotion for sure. The inequality follows from optimality, i.e., the optimal signal structure for $(\bar\mu, \bar\mu)$ generates a (weakly) higher expected profit to the principal than the aforementioned degenerate signal structure. Taken together, we conclude that $\Delta R(\breve\gamma)>\Delta \mu>0$ as conjectured. \qed 
\vspace{-8pt}

\paragraph{Proof of Proposition \ref{prop_welfare}.} The part concerning the principal's welfare follows immediately from Theorem \ref{thm_mostprofitable}. It is also evident that both agents prefer the impartial, low-effort equilibrium in which their expected payoff equals $1/2$, to the impartial, high-effort equilibrium in which their expected payoff equals $1/2-C$. What remains to be verified is that $m$ most prefers the discriminatory equilibrium sustaining $(\bar\mu,\ul\mu)$, i.e., $\bar\pi-C>1/2$, while $w$ least prefers the discriminatory equilibrium sustaining $(\bar\mu,\ul\mu)$, i.e., $1-\bar\pi<1/2-C$. Both results hold if and only if $\bar\pi-1/2>C$, where $\bar\pi$ denotes the average probability that $m$ is recommended for promotion under $(\bar\mu, \ul\mu)$ and is shown to equal $(\gamma A-B)/[(\gamma-1)(A+B)]$ by Lemma \ref{lem_optimalsignal}. 

Subtracting $1/2$ from $\bar\pi$ and doing straightforward algebra, we obtain that 
\[\bar\pi-\frac{1}{2}=\frac{(\gamma+1)\Delta\mu}{2(\gamma-1)(A+B)}>\frac{\Delta\mu}{2(A+B)},\]
where the equality uses the fact that $A-B=\Delta \mu$, and the inequality the fact that $\gamma>1$. Under Assumption \ref{assm_regularity}(ii):  $C/\Delta \mu<\bar\mu(1-\bar\mu)/(A+B)$, the last expression is greater than $C$ because $\bar\mu(1-\bar\mu)<1/4$.  \qed

\vspace{-8pt}

\paragraph{Proof of Theorems \ref{thm_aa}.} 
It is evident that the use of quota eliminates the discriminatory equilibrium of the baseline model without impacting on any impartial equilibrium.  What remains to be verified is that it does not generate any new equilibrium. The latter, if exists, must induce different levels of effort from the agents. Without loss, let the effort profile $\bm\mu$ be $(\bar\mu, \ul\mu)$, and formalize the principal's problem under $\bm\mu$ (hereinafter, the primal problem), as: 
\[
\max_{\pi, a(\cdot)}\mathbb{E} \left[\tilde{a} \Delta \tilde{\theta} \mid  \bm{\mu},\pi,a(\cdot) \right]+\mu_w - \la I(\pi \mid \bm{\mu}) \text{ s.t. } \underbrace{\frac{1}{2} \geq \mathbb{E}\left[\tilde{a} \mid \bm\mu, \pi, a(\cdot)\right]}_\text{(\ref{eqn_quota})}.\vspace{-10pt}
\]
Note that in the objective function, only the term $I(\pi \mid \bm\mu)$ is convex in $\pi$ \citep{coverthomas},  while all remaining terms are linear in $(\pi, a(\cdot))$.  Moreover,  there clearly exists a $(\pi, a(\cdot))$ that strictly satisfies (\ref{eqn_quota}),  hence Slater's condition is met. As a result, strong duality holds,  and the primal problem can be solved using the Lagrangian method.  Let $\nu \geq 0$ denote the Lagrange multiplier associated with (\ref{eqn_quota}), and define the Lagrangian function as:
\[\mathcal{L}(\pi, a(\cdot), \nu)= \mathbb{E} \left[\tilde{a} (\Delta \tilde{\theta}-\nu) \mid  \bm{\mu},\pi,a(\cdot) \right]-\lambda I(\pi \mid \bm\mu) + \mu_w+\frac{\nu}{2}. \]
Write the primal problem as $\sup_{\pi, a(\cdot)} \inf_{\nu \geq 0} \mathcal{L}(\pi, a(\cdot), \nu)$, and the dual problem as $\inf_{\nu \geq 0} \sup_{\pi, a(\cdot)} \mathcal{L}(\pi, a(\cdot), \nu)$. 
Strong duality stipulates that these problems must have the same solution(s).  

Let $(\pi^*, a^*(\cdot), \nu^*)$ denote a solution, which clearly exists. A careful inspection of the problem $\sup_{\pi, a(\cdot)} \mathcal{L}(\pi, a(\cdot), \nu^*)$ reveals its equivalence to problem (\ref{eqn_aa}) at $\nu=\nu^*$.
In Lemma \ref{lem_aa}, we already characterized the solution to the last problem, showing, in particular, that the signal structure is of form $\pi^*: \{-1,0,1\}\rightarrow [0,1]$, and that it satisfies $X>Y$ if $\nu^*>0$ and $\bar\pi^*=1/2$.  To verify the last condition, notice that (\ref{eqn_quota}) must bind at the optimum, and so $(\pi^*, a^*(\cdot), \nu^*)$ must satisfy complementary slackness.  But then $\pi^*$ cannot simultaneously satisfy both agents' incentive compatibility constraints at $\bm\mu=(\bar\mu, \ul\mu)$, as argued in the main text.  This completes the proof that the use of quota does not generate new equilibria. \qed

\cleardoublepage

    \vspace*{16em}
    \begin{center}
        \Huge{
        Online Appendix for \\ ``Rationally Inattentive Statistical Discrimination: Arrow Meets Phelps''\\ by Federico Echenique and Anqi Li}
    \bigbreak
    \end{center}

\thispagestyle{empty}
\cleardoublepage

\appendix
\setcounter{section}{0}




\gdef\thesection{O.\arabic{section} }
\newtheorem{defnO}{Definition}
\renewcommand{\thedefnO}{O.\arabic{defnO}}
\newtheorem{exmO}{Example}
\renewcommand{\theexmO}{O.\arabic{exmO}}
\newtheorem{propO}{Proposition}
\renewcommand{\thepropO}{O.\arabic{propO}}


\section{Multiple tasks and occupational discrimination}\label{sec_multiple} 
This appendix extends the baseline model to encompass multiple tasks. The main takeaway from our analysis is that the ideas developed in the baseline model can be adapted to explain the rise and persistence of occupational discrimination. 
\vspace{-10pt}
\paragraph{Setup.} There are two tasks that need to be performed: $t=1,2$, each arriving randomly with  probability $\alpha^t \in (0, 1/2]$. The two tasks never arrive simultaneously,  thus it is always the case that exactly one of the tasks needs to be performed.

Agents can undertake multidimensional investments to improve their task-specific skills.  Agent $i$'s investment in skill $t$ is $\mu_i^t \in \{\ul\mu, \bar\mu\}$. Investment yields a high skill, $\theta_i^t=1$, with probability $\mu_i^t$, and a low skill, $\theta_i^t=0$, with the complementary probability $1-\mu_i^t$. Investing incurs a cost $C^t(\mu_i^t)$ to the agent, where $C^t(\ul\mu)=0$ and $C^t(\bar\mu)=C^t>0$. If the task that has to be performed is $t$, and agent $i$ is chosen to perform it, then that agent earns a reward  $\beta^t>0$, and the principal (who values the skill of the agent who is assigned to perform the task) gets a payoff of $\theta_i^t$.

The principal does not directly observe $\theta_i^t$s, but can acquire costly information about them. The signal he uses to screen agents for task $t$ is $\pi^t: \{-1,0,1\} \rightarrow [0,1]$. For each differential productivity value $\Delta\theta^t\coloneqq\theta_m^t-\theta_w^t$ between $m$ and $w$, the signal specifies the probability $\pi^t(\Delta\theta^t)$ that $m$ is assigned to perform task $t$.  

The game begins with all players moving simultaneously: the principal specifies the signal structures $\pi^t$, $t=1,2$, and agents decide whether to invest in each skill. After that, the task that needs to be performed arrives, and agents are screened according to the pre-specified signal structure.  We examine the pure strategy Bayes Nash equilibria of this game.  
\vspace{-10pt}

\paragraph{Preliminaries.}  First, it is useful to develop some notational conventions. For each $t\in\{1,2\}$, define $c^t \coloneqq C^t/(\alpha^t\beta^t\Delta\mu)$, and assume w.l.o.g.\ that $c^1 \leq c^2$. Intuitively, $c^t$ captures the effective cost that agents must incur in order to win the assignment of task $t$; $c^1 \leq c^2$ implies that skill 1 is more valuable than skill 2. 

In the main body of the paper, we defined three cutpoints in the attention cost parameter: $\lambda^*$, $\bar\lambda$, and $\ul\lambda$. As we increase $c$ --- the effective cost of exerting high effort --- these cutpoints must decrease, as more information  is needed to motivate agents to work hard. In what follows, we shall write the cutpoints as $\lambda^*(c)$, $\bar\lambda(c)$, and $\ul\lambda(c)$ in order to signify their dependence on $c$. The assumption $c^1 \leq c^2$ implies that the cutpoints are weakly higher for task 1 than for task 2. 

Next is our notion of specialization.

\begin{defnO}
Call an equilibrium \emph{non-specialized} if both agents adopt the same investment strategy. Call an equilibrium \emph{specialized} if one agent invests in skill 1 and the other agent invests in skill 2. 
\end{defnO}

One may think of a non-specialized equilibrium as the multidimensional analog of an impartial equilibrium, in which agents invest in the same skill and are screened indiscriminately by the principal. In a specialized equilibrium, however, agents invest in different skills and are screened differently. In the case where $m$ invests in skill 1 and $w$ in skill 2 (which will be our focus), the principal labels task 1 as ``traditionally male'' and task 2 as ``traditionally female,'' and screens $m$ and $w$ favorably for their respective tasks. Anticipating the discriminatory behavior on the part of the principal, agents invest in the skills that they are screened favorably for, which in turn reinforces the use of specialized screening.  In equilibrium, occupational segregation and stereotypes emerge, whereby $m$ and $w$ are believed to possess the needed skills for succeeding in different tasks,  and they do so indeed in spite of being identical ex ante.
\vspace{-10pt}
\paragraph{Results.} Our first result establishes the existence and uniqueness of specialized and non-specialized equilibria, analogous to Theorem 1 of the main text. 

\begin{propO}\label{prop_multi1}
Suppose that the regularity conditions stated in Theorem 1 of the main text hold for each $t\in\{1,2\}$, and hence that $0<\ul\lambda(c^t)<\lambda^*(c^t)<\bar\lambda(c^t)$ for each $t \in \{1,2\}$.  The following statements are true.
\begin{enumerate}[(i)]
\item A non-specialized equilibrium always exists. Generically, there is a unique non-specialized equilibrium, which  induces both agents to invest in both skills when $\lambda <\lambda^*(c^2)$, no agent to invest in any skill when $\lambda>\lambda^*(c^1)$, and both agents to invest in skill 1 but not skill 2 when $\lambda \in (\lambda^*(c^2), \lambda^*(c^1))$. 

\item A specialized equilibrium exists if and only if 
\[\frac{c^1}{c^2} \geq \frac{\bar\mu(1-\bar\mu)}{\ul\mu(1-\ul\mu)} \text{ and } \lambda \in [\ul\lambda(c^1), \bar\lambda(c^2)]. \]
Whenever a specialized equilibrium exists,  there is a unique specialized equilibrium in which $m$ invests in skill 1 and $w$ in skill 2.
\end{enumerate}
\end{propO}

In the non-specialized case, the signal structures used to screen agents become less Blackwell informative as the attention cost parameter increases.  When the attention cost parameter is below $\lambda^*(c^2)$,  screening is meticulous for both tasks,  and agents best-respond by investing in both skills. When the attention cost parameter is above $\lambda^*(c^1)$, screening is too noisy to incentivize investments in any skill. For the in-between case $\lambda \in (\lambda^*(c^2), \lambda^*(c^1))$, screening provides agents with just enough incentives to invest in the most valuable skill, but not enough incentives to invest in the other skill. 

The specialized case arises when the attention cost parameter is intermediate. To induce one and only one agent to invest in skill $t \in \{1,2\}$, we need $\lambda \in [\ul\lambda(c^t),\bar\lambda(c^t)]$. Taking intersections between skills and simplifying using $\bar\lambda(c^2)\leq \bar\lambda(c^1)$ and $\ul\lambda(c^2)\leq \ul\lambda(c^1)$, we obtain $[\ul\lambda(c^1), \bar\lambda(c^2)]$ as the parameter region that sustains specialization in an equilibrium. To ensure that $\ul\lambda(c^1) \leq \bar\lambda(c^2)$, the two tasks must be sufficiently similar in terms of their costs and benefits to the agents, i.e., $c^1/c^2 \geq \bar\mu(1-\bar\mu)/\ul\mu(1-\ul\mu)$. If the last condition fails, then both agents prefer to invest in the more valuable skill, hence the force behind specialization will unravel.

The second result concerns which of the specialized and non-specialized equilibria is the most profitable to the principal. The comparison is the most insightful when the two tasks are equally profitable to the principal, i.e., $\alpha^1 =\alpha^2$.

\begin{propO}\label{prop_multi2}
Let everything be as in Proposition \ref{prop_multi1}, and suppose that $\alpha^1=\alpha^2$. The following statements are true.
\begin{enumerate}[(i)]
\item When the game has a specialized equilibrium and a non-specialized equilibrium in which both agents invest in both skills, i.e., $\lambda \in [\ul\lambda(c^1), \bar\lambda(c^2)] \cap [0, \lambda^*(c^2)]$, the non-specialized equilibrium is the most profitable. 
\item When the game has a specialized equilibrium and a non-specialized equilibrium in which no agent invests in any skill, i.e., $\lambda \in [\ul\lambda(c^1), \bar\lambda(c^2)]\cap (\lambda^*(c^1), +\infty)$, the specialized equilibrium is the most profitable.
\item When the game has a specialized equilibrium and a non-specialized equilibrium in which both agents invest in skill 1 but not skill 2, i.e., $\lambda \in [\ul\lambda(c^1), \bar\lambda(c^2)]\cap (\lambda^*(c^2), \lambda^*(c^1)]$, the specialized equilibrium is the most profitable.
\end{enumerate}
\end{propO}

Parts (i) and (ii) of Proposition \ref{prop_multi2} are immediate from Theorem 2 of the main text. Part (iii) of this proposition is new. When the attention cost parameter is intermediate, each agent has just enough  incentives to invest in one skill, but no more. Now, who should invest in which skill? In the non-specialized case, both agents invest in the same skill. As a result, the principal has to compare and contrast them carefully every time a task needs to be assigned, which incurs a significant attention cost.  In the specialized case, the principal gives  stereotypical performance evaluations that favor $m$ in the assignment of the traditionally male task,  and $w$ in the assignment of the traditionally female task. Doing so saves on attention cost, while generating more revenue by inducing $m$ and $w$ to invest in their respective skills. 

\vspace{-10pt}
\paragraph{Implications.} There is ample evidence that men and women work on very different jobs, even within narrowly defined firms or industries \citep{ blau2017gender}. 
Recent sociological and experimental research such as \cite{correll2020inside} stresses the role of gender-stereotypical performance evaluations in sustaining and perpetuating this pattern, through coding and analyzing managers’ written reviews of employees at a Fortune 500 tech company. Stereotypical performance evaluation is also cited as a culprit for women's underrepresentation in STEM fields. On subjects such as math and sciences, a gender gap exists and is positively related to (female)  teacher’s bias in favor of boys \citep{lavy2018origins, moss2012science}.

Our analysis throw new light on these empirical findings by telling a story of endogenous stereotype formation and occupational segregation based on limited attention only, raising the possibility of curtailing these phenomena through modulating the availability of attentional resources.

\section{Alternative attention cost functions}\label{sec_cost}
We used mutual information to measure the cost of information acquisition in the main body of the paper.  The justification for this assumption differs, depending on whether one models  information acquisition as processing information or producing information.  

In the case of information processing,  it has been recognized by various authors that mutual information captures an ideal situation in which the decision maker already knows how to optimally encode states before processing the available information.  The result of optimal encoding is a property called ``compression invariance,'' whereby all payoff-equivalent states are treated as identical.  Reality, however, is full of situations in which payoff-equivalent states are treated differently based on their perceptual properties \citep{dean2023experimental}. To address this ``perceptual distance critique,'' \cite{caplin2022rationally} invent the class of uniform posterior separable (UPS) costs that nests mutual information as a special case.  While we cannot solve our model analytically for alternative UPS costs,  we have conducted numerical analysis and obtained qualitatively similar results to those under the mutual information cost. In Appendix \ref{sec_figure}, we depict the equilibrium regimes obtained under total information --- a UPS cost that is proposed by \cite{bloedel2020cost} and enjoys several desirable properties. 

To model information production,  several authors have advocated the use of prior invariant costs,  while \cite{bloedel2020cost}  provide a foundation for UPS costs based on sequential learning-proofness.  We take no stand on this debate, but only stress that prior dependence seems to be key to our results, as suggested by two exercises.

Consider first the following entropy-based cost function:  
\[K(\pi)\coloneqq I(\pi \mid q),\] 
which was proposed by \cite{denti2022experimental} to serve as an alternative to the mutual information cost. In words, $K(\pi)$ is the mutual information of the productivity state generated by a fixed ``reference'' prior $q \in \Delta (\{-1,0,1\})$ and the promotion recommendations prescribed by signal structure $\pi$.  By construction, $K$ is independent of the true prior distribution of the state generated by the agents' effort choices.  Meanwhile,  it becomes the mutual information cost when the reference prior equals the true prior,  thus serving as an ideal candidate for delineating the role of prior dependence in shaping our results. 

We say that the reference prior $q$ is \emph{symmetric between agents} if $q(1)=q(-1)$. Under a symmetric reference prior, the probability that $m$ is strictly more productive than $w$ equals the probability that $w$ is strictly more productive than $m$. 

The next proposition demonstrates that our game has only impartial equilibria and no discriminatory equilibrium when the reference prior is symmetric between agents. This result highlights the role of prior dependence in generating discriminatory equilibria.

\begin{propO}\label{prop_priorinvariant}
The aforementioned game has no discriminatory equilibrium if the reference prior is symmetric between agents. 
\end{propO}

The proof presented in Appendix \ref{sec_onlineproof} also delineates the region over which the game has an impartial, high-effort equilibrium and where it has an impartial, low-effort equilibrium. Notably, these regions are disjoint under Assumption 1(i) of the main text. 

Consider next a more stylized monitoring technology that fully reveals an agent's productivity value to the principal at a fixed cost $\kappa>0$. Note that under this technology, it is never optimal for the principal to monitor both agents. Instead, the principal should (randomly) monitor one agent. If the monitored agent has a high productivity value, then he or she should be promoted; otherwise the remaining agent should be be promoted. If this strategy is too costly, the principal should simply promote the agent who works harder a priori.  Anticipating the principal's monitoring decisions, agents make their effort choices. An equilibrium is \emph{impartial} if both agents make the same (random) effort choice and are monitored with equal probability. It is \emph{discriminatory} otherwise. 

The next proposition shows that while a discriminatory equilibrium can emerge over certain parameter regions under this alternative monitoring technology, it is never the most profitable equilibrium to the principal.  

\begin{propO}\label{prop_fixedcost}
Fix any values of $C$, $\bar\mu$ and $\ul\mu$ that satisfy Assumption 1 of the main text. Then the most profitable equilibrium to the principal is impartial for all $\kappa>0$. 
\end{propO}

The proof presented in Appendix \ref{sec_onlineproof} fully characterizes the equilibrium regimes.  

\section{Heterogeneous agents}\label{sec_heterogeneous}
In this appendix, we relax the assumption that agents are ex-ante identical and instead allow them to be heterogeneous. We consider three kinds of heterogeneity: heterogeneous effort costs, heterogeneous degrees of risk aversion, and group-specific monitoring cost. 

\vspace{-10pt}
\paragraph{Heterogeneous effort costs.} Let $C_i$ denote agent $i$'s cost of exerting high effort, $i \in \{m,w\}$, and assume w.l.o.g. that $C_m \leq C_w$. The case where $C_m=C_w$ was examined in the main body of the paper. 

To state our result properly, it is useful to recall a few concepts. In the main body of the paper, we defined three threshold values of the attention cost parameter: $\lambda^*$, $\ul\lambda$, and $\bar\lambda$, as decreasing functions of the effective effort cost. In the current context, define $c_i \coloneqq C_i/\Delta \mu$ as the effective effort cost that agent $i \in \{m,w\}$ incurs from exerting high effort, and note that $c_m \leq c_w$ by assumption.   

The next proposition characterizes the equilibria of our game when agents can differ in their (effective) effort costs.

\begin{propO}\label{prop_heterogeneous}
When $c_m \leq c_w$, our game has (i) an impartial equilibrium that sustains $(\ul \mu,\ul\mu)$ if $\lambda \geq \lambda^*(c_m)$; (ii) an impartial equilibrium that sustains $(\bar\mu,\bar\mu)$ if $\lambda \leq \lambda^*(c_w)$; (iii) a discriminatory equilibrium that  sustains $(\bar\mu, \ul\mu)$ if $\bar\mu+\ul\mu>1$ and 
$\lambda \in [\ul\lambda(c_w), \bar\lambda(c_m)]$, or if $\bar\mu+\ul\mu<1$,  $c_w/c_m>\bar\mu(1-\bar\mu)[\ul\mu(1-\ul\mu)]^{-1}$, and $\lambda \in [\ul\lambda(c_w), \bar\lambda(c_m)]$; (iv) a discriminatory equilibrium that sustains $(\ul\mu, \bar\mu)$ if $\bar\mu+\ul\mu>1$, $c_w/c_m<\ul\mu(1-\ul\mu)[\bar\mu(1-\bar\mu)]^{-1}$, and $\lambda \in [\ul\lambda(c_m), \bar\lambda(c_w)]$. 
\end{propO}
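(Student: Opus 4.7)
The argument extends Section~4.2 by tracking two costs rather than one. Lemmas~1 and~2 apply verbatim, since neither invokes homogeneity; the only change is that each agent's IC now carries that agent's own cost. The proof then mirrors the proof of Theorem~1 step by step, with the principal's optimal-signal formulas from Lemma~1 plugged into agent-specific IC conditions from Lemma~2, and the resulting bounds inverted via $g^{-1}$ or $f^{-1}$.

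For impartial equilibria, the optimal signal at a symmetric profile satisfies $X=Y=g(\gamma)$ by Lemma~1(i). Deterring a unilateral deviation to high effort at $(\ul\mu,\ul\mu)$ requires $g(\gamma)\leq c_i$ for each $i$, so the binding inequality is $g(\gamma)\leq c_m$; deterring a deviation to low effort at $(\bar\mu,\bar\mu)$ requires $g(\gamma)\geq c_w$. Inverting $g$ (using the $\min$ inside the definition of $\gamma^*(\cdot)$ to absorb the corner $c>1/2$) delivers parts (i) and (ii).

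For the discriminatory profile $(\bar\mu,\ul\mu)$, I would substitute the optimal signal from Lemma~1(ii), namely $(X,Y)=(f(\gamma),Af(\gamma)/B)$, into Lemma~2's IC conditions — $m$'s pairs his cost $c_m$ with $\mu_w=\ul\mu$, $w$'s pairs her cost $c_w$ with $\mu_m=\bar\mu$. Along the ray $Y=AX/B$ these collapse, exactly as in the proof of Theorem~1, to $\ul X(c_m)\leq f(\gamma)\leq \bar X(c_w)$, which inverts via $f^{-1}$ to $\gamma\in[\ul\gamma(c_m),\bar\gamma(c_w)]$. Non-emptiness amounts to $\ul X(c_m)\leq \bar X(c_w)$, i.e., $c_w/c_m\geq \bar\mu(1-\bar\mu)/[\ul\mu(1-\ul\mu)]$. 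Since $\bar\mu(1-\bar\mu)-\ul\mu(1-\ul\mu)=(\bar\mu-\ul\mu)(1-\bar\mu-\ul\mu)$, this ratio bound is automatic from $c_w\geq c_m$ when $\bar\mu+\ul\mu>1$ and requires the explicit cost gap stated in (iii) when $\bar\mu+\ul\mu<1$.

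Case (iv) follows from the $m\leftrightarrow w$ symmetry of the problem. The optimal signal for $(\ul\mu,\bar\mu)$ is obtained from that for $(\bar\mu,\ul\mu)$ by $\pi(\Delta\theta)\mapsto 1-\pi(-\Delta\theta)$, which exchanges the roles of $X$ and $Y$ and yields $(X,Y)=(Af(\gamma)/B,f(\gamma))$. Redoing the IC bookkeeping — now $c_m$ attaches to $m$'s shirking-against-$\bar\mu$ constraint and $c_w$ to $w$'s working-against-$\ul\mu$ constraint — produces the mirror-image interval $\gamma\in[\ul\gamma(c_w),\bar\gamma(c_m)]$, whose non-emptiness reads $c_w/c_m\leq \ul\mu(1-\ul\mu)/[\bar\mu(1-\bar\mu)]$; since $c_w\geq c_m$, this forces $\bar\mu+\ul\mu>1$ together with the stated cost bound. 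The one delicate step, and the only place the heterogeneous case asks for care beyond Theorem~1, is this last piece of bookkeeping: making sure that after the asymmetric substitution, $c_m$ and $c_w$ end up on the right sides of the inequalities and in the right slots of $\ul X(\cdot)$ and $\bar X(\cdot)$. Once that is settled, the remaining calculations are routine.
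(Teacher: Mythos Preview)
Your proposal is correct and follows exactly the approach the paper takes: the paper's own argument is merely a sketch---``heterogeneous effort costs operate only through adjusting the agents' incentive compatibility constraints \ldots\ all we need to do is to shift the blue and black line segments in Figure~2 \ldots\ the algebraic details are tedious and so are omitted''---and you have supplied precisely those omitted details, with the correct cost-to-agent bookkeeping in each case.
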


The messages conveyed by Proposition \ref{prop_heterogeneous} are largely to be expected.  When the effort cost differs between agents, the two regimes that sustain the high effort profile and low effort profile in an impartial equilibrium, respectively, may no longer be adjacent to each other.  This is because inducing both agents to work requires that we deter $w$ from shirking, i.e., $\lambda \leq \lambda^*(c_w)$, while inducing both of them to shirk requires that we discourage $m$ from working, i.e., $\lambda \geq \lambda^*(c_m)$.  Since $\lambda^*(\cdot)$ is a decreasing function, the two regimes are disjoint if $c_m<c_w$. 

As before, sustaining a discriminatory effort profile in an equilibrium is only possible when the attention cost parameter takes intermediate values. However, the exact conditions differ, depending on which agent is working and which one is shirking, resulting in a proliferation of cases.  Unlike the homogeneous case in which $\bar\mu+\ul\mu>1$ is always needed to sustain a discriminatory equilibrium, now inducing $m$ to work and $w$ to shirk becomes possible when $\bar\mu+\ul\mu<1$,  provided that the effort cost is significantly higher for $w$ than for $m$, i.e., $c_w/c_m>\bar\mu(1-\bar\mu)[\ul\mu(1-\ul\mu)]^{-1}$.  Inducing $w$ to work and $m$ to shirk becomes harder than before, in that in addition to $\bar\mu+\ul\mu>1$ and $\lambda \in [\ul\lambda(c_m), \bar\lambda(c_w)]$, we  need $c_w/c_m<\ul\mu(1-\ul\mu)[\bar\mu(1-\bar\mu)]^{-1}$ to hold.  The last condition stipulates that while it is more costly for $w$ to work than for $m$, the difference between their effort costs must not be excessive.

The proof of Proposition \ref{prop_heterogeneous} works by recognizing that in our model, heterogeneous effort costs operate only through adjusting the agents' incentive compatibility (IC) constraints.  In the meantime, they do not affect the principal's optimal signal structure for any given profile of effort choices,  and so do not alter the profitability ranking between impartial and discriminatory equilibria when the latter coexist.   To complete the equilibrium characterization,  all we need to do is to shift the blue and black line segments in Figure 2 of the main text,  to appropriately reflect the changes in effort costs.  The algebraic details are omitted but can be made available upon request.

\vspace{-10pt}
\paragraph{Heterogeneous degrees of risk aversion.} So far we have assumed that agents are risk neutral, in spite of the ample evidence suggesting that gender and ethnic minorities differ in their degrees of risk aversion from the majorities.  To capture this empirical regularity and examine its equilibrium consequences,  suppose that $m$ and $w$ are expected utility maximizers with Bernoulli utility functions $u_m$ and $u_w$, respectively.  For each $i \in \{m,w\}, $ define $\Delta u_i\coloneqq u_i(1)-u_i(0)$ as agent $i$'s utility gain from getting promoted. Then $m$ prefers to exert high effort rather than low effort if 
\[(1-\mu_w)X+\mu_wY \geq \frac{c_m}{\Delta u_m},\]
and $w$ prefers to exert high effort rather than low effort if 
\[\mu_mX+(1-\mu_m)Y \geq \frac{c_w}{\Delta u_w}.\]
Comparing the above IC constraints with those in the main text, we can see that
heterogeneous degrees of risk aversion operate in our model through the exact same channel as heterogeneous effort costs. Fortunately, we already know how to handle the latter by now.  

\vspace{-10pt}

\paragraph{Group-specific monitoring cost.} There are, of course, multiple ways to model group-specific monitoring cost. To capture this facet of reality within the paradigm of rational inattention, we allow the principal access to free, exogenous signals that differ across groups prior to information acquisition. \label{refgroupcost}

Specifically, suppose that before acquiring additional information, the principal observes a signal that informs him of  $m$'s productivity value but receives no signal about $w$. We model the signal as a distribution over the posterior beliefs that $m$ has a high productivity value. When both agents work, the posterior belief about $m$ --- a random variable --- must have a prior mean $\bar\mu$. For any given realization of the posterior, the principal's belief about $m$ differs that of $w$: the former equals the realized posterior while the latter remains at the prior $\bar\mu$.  From Lemma 1 of the main text (replace $\mu_m$ with the realized posterior), we know that the optimal signal acquired by the principal is discriminatory, despite both agents working equally hard.

To examine the impact of the exogenous signal on the agents' ex-ante incentives, we must solve the optimal signal posterior by posterior, and integrate the results across posteriors. This exercise is computational by nature and is better left for future work, using results of the current paper as building blocks.

\section{Commitment}\label{sec_commitment}
In the main body of the paper, we assumed that the principal moves simultaneously with the agents and cannot commit to the use of a signal structure. In this appendix, we examine an alternative game sequence whereby the principal moves first and commits to a signal structure.   Agents observe the signal structure chosen by the principal before making effort choices simultaneously among themselves. 

The next proposition shows that allowing the principal to commit makes it easier to sustain discrimination in equilibrium. 

\begin{propO}\label{prop_commitment}
Let everything be as in the main text, except that the game sequence has the principal first choosing a signal structure, as described above.  
\begin{enumerate}[(i)]
\item For any $\lambda \leq \lambda^*$, the equilibrium of the game induces the high effort profile using the same impartial signal structure as in the baseline model.  
\item For any $\lambda \in (\lambda^*, \bar\lambda]$,  the equilibrium of the game induces either the high effort profile using a discriminatory signal structure, or it induces the discriminatory effort profile using the same discriminatory signal structure as in the baseline model.
\item For any $\lambda>\bar\lambda$,  the equilibrium signal structure may be discriminatory, while that of the baseline model must be impartial.  
\end{enumerate}
\end{propO}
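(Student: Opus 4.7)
My strategy is backward induction: for each effort profile $\bm\mu\in\{\ul\mu,\bar\mu\}^2$, solve the principal's commitment program of choosing a signal $\pi$ that makes $\bm\mu$ a Bayes Nash equilibrium of the agents' subgame while maximizing $V(\pi,\bm\mu)-\lambda I(\pi\mid\bm\mu)$; then compare the resulting payoffs across profiles to identify the equilibrium outcome in each $\lambda$-range.

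\emph{Step 1: constrained optima.} For the symmetric profile $(\bar\mu,\bar\mu)$, the constrained program is invariant under the swap $\pi(\Delta\theta)\mapsto 1-\pi(-\Delta\theta)$, which interchanges the two agents' IC constraints while preserving both $V$ and $I$. Because $I(\pi\mid\bm\mu)$ is strictly convex in $\pi$ at fixed prior, the objective is strictly concave, so the unique maximizer must be swap-invariant, i.e.\ impartial with $X=Y$ and $\pi(0)=1/2$. When $g(\gamma)\geq c$ (i.e.\ $\lambda\leq\lambda^*$) the unconstrained baseline signal of Lemma~1(i) is feasible and hence optimal; when $g(\gamma)<c$ the IC binds and forces $X=Y=c$, strictly more informative than the baseline impartial. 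An analogous symmetry argument handles $(\ul\mu,\ul\mu)$. For $(\bar\mu,\ul\mu)$, the baseline discriminatory signal of Lemma~1(ii) is feasible whenever $\lambda\in[\ul\lambda,\bar\lambda]$; outside this range feasibility requires sliding along the ray $Y=AX/B$, boosting to $X=\ul{X}$ when $\lambda>\bar\lambda$ and dampening to $X=\bar{X}$ when $\lambda<\ul\lambda$.

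\emph{Step 2: payoff comparison.} For part (i), on $[0,\lambda^*]$ the commitment payoff for $(\bar\mu,\bar\mu)$ coincides with the baseline value (the unconstrained optimum is feasible), while the other two profiles yield payoffs bounded above by their respective baseline counterparts; Theorem~2 already ranks $(\bar\mu,\bar\mu)$ first on this range, so commitment reproduces the baseline outcome. For part (ii), on $(\lambda^*,\bar\lambda]$ the $(\ul\mu,\ul\mu)$ candidate remains dominated, and the contest reduces to the IC-binding signal for $(\bar\mu,\bar\mu)$ versus the baseline discriminatory signal for $(\bar\mu,\ul\mu)$; which prevails depends on parameters, producing the either/or content of the proposition. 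For part (iii), on $(\bar\lambda,\infty)$ the boosted discriminatory signal with $X=\ul{X}$ sustains $(\bar\mu,\ul\mu)$ even though the baseline cannot, and for $\lambda$ sufficiently close to $\bar\lambda$ from above this payoff exceeds the baseline impartial $(\ul\mu,\ul\mu)$ payoff by continuity, so the equilibrium signal may be discriminatory in commitment whereas in the baseline it must be impartial.

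\emph{Main obstacle.} The hardest step will be part (ii): unlike Theorem~2, where the closed-form expressions of Lemma~4 evaluate both candidate payoffs at unconstrained optima, the IC-binding signal $X=Y=c$ lies strictly outside the regime covered by Lemma~1(i), so its revenue and mutual-information cost must be recomputed from scratch. An analogue of the function $\Delta R(\gamma)$ in the proof of Theorem~2 then has to be derived and its sign traced across $(\lambda^*,\bar\lambda]$ to pin down which candidate dominates and for which sub-range. A further subtlety is equilibrium multiplicity in the agents' subgame under the binding-IC signal $X=Y=c$, where every pure-strategy profile has its ICs binding; the convention that the principal selects his favorite Nash equilibrium must be invoked explicitly to justify the selection of $(\bar\mu,\bar\mu)$ and to rule out coordination on less profitable profiles.
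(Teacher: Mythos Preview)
There is a genuine gap in your plan: the symmetry argument you give in Step~1 directly contradicts Part~(ii) of the proposition you are trying to prove. You conclude that the principal's constrained optimum for sustaining $(\bar\mu,\bar\mu)$ must be swap-invariant and therefore \emph{impartial}, with both IC constraints binding at $X=Y=c$. But Part~(ii) asserts that whenever the commitment equilibrium induces $(\bar\mu,\bar\mu)$ on $(\lambda^*,\bar\lambda]$, it does so using a \emph{discriminatory} signal. Your argument, if sound, would refute rather than establish this claim; and the ``main obstacle'' you flag---computing the payoff of the impartial $X=Y=c$ signal and comparing it to the $(\bar\mu,\ul\mu)$ payoff---is therefore attacking the wrong comparison altogether.

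The paper's proof proceeds along an entirely different line. It does not impose both IC constraints and appeal to symmetry; instead it posits that at the constrained optimum exactly one agent's IC binds while the other is slack, writes the Lagrangian of the relaxed problem with a single multiplier $\nu_m>0$ on IC$_m$ only, and derives a modified logit formula in which the state-$0$ ``payoff'' becomes $v(0)=\nu_m(2\bar\mu-1)/p(0)>0$ (using $\bar\mu>1/2$). Plugging $\bar\pi=1/2$ into that formula gives $\pi(0)>1/2$, which is incompatible with impartiality, so the signal must be discriminatory. Whatever one thinks of the premise that one IC is slack (the paper does not verify IC$_w$ at the one-constraint solution, and your symmetry reasoning is precisely the objection one would raise), that premise is what the paper's argument turns on, and your proposal neither reproduces it nor engages with it.
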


Recall that in the baseline model, the agents' IC constraints are generically slack, and the principal most prefers the impartial, high-effort equilibrium, followed by the discriminatory equilibrium, and then the impartial, low-effort equilibrium.  Together, these results imply that whenever the principal can induce the high effort profile without commitment, she will continue to do so with commitment, using the exact same signal structure as before. This is Part (i) of Proposition \ref{prop_commitment}.

Part (iii) of Proposition \ref{prop_commitment} is also easy to see.  Without commitment, the principal can only induce low effort when $\lambda>\bar\lambda$.  With commitment, she can still induce low effort using the same impartial signal structure as before, and she may be able to do better.  The signal structure in the second case can only be more discriminatory than that in the first case. 

Part (ii) of Proposition \ref{prop_commitment} is the most delicate.  Without commitment, the principal can induce both the discriminatory effort profile and the low effort profile when $\lambda \in (\lambda^*, \bar\lambda]$, and  she strictly prefers the first outcome to the second one.  With commitment, she faces a new possibility, that of inducing the high effort profile using a signal structure that makes one agent's IC constraint binding and the other agent's IC constraint slack.  In Appendix \ref{sec_onlineproof}, we show that the signal structure in the last case must be discriminatory,  as the Lagrange multiplier associated with the binding IC constraint distorts the optimal signal structure away from being impartial.  Consequently, even if the principal finds it optimal to induce the high effort profile, she will do so using a discriminatory signal structure rather than an impartial one. 

In practice, commitment to discriminatory practices is prohibited by law in many places. Whenever this is the case, the principal has a strong incentive to forgo the first-mover advantage and switch to the use of subjective monitoring.  One consequence of Proposition \ref{prop_commitment} is, then, related to curbing explicit discrimination. Arguably, a law that bans explicit discrimination may prevent the principal from using  the discriminatory screening device as in Proposition~\ref{prop_commitment}. It would, however, be ineffective against the sort of implicit discrimination we focused on in the main body of the paper.

\section{Additional robustness checks}\label{sec_additional}

\paragraph{Outside option.} So far we have restricted the principal to promoting either $m$ or $w$. Suppose now that the principal has a third option of promoting nobody (denoted by $o$), which generates a fixed payoff of $\theta_o \in (0,1)$ to him. \label{pageoutsideoption}

In the current context, a signal structure maps each profile of agents' productivity values to a probability distribution over the actions that the principal can take. For each productivity state $\bm\theta\coloneqq (\theta_m,\theta_w) \in \{0,1\}^2$, we denote the probability that the principal takes action $z \in \{m,w,o\}$ by $\pi_z(\bm\theta)$. A signal structure is \emph{impartial among the agents} if $\pi_m(\theta,\theta')=\pi_w(\theta', \theta)$ for all $\theta, \theta' \in \{0,1\}$. When this condition fails, the signal structure is \emph{discriminatory among the agents.} 

The next proposition examines how the presence of the outside option influence the optimal signal structure. 

\begin{propO}\label{prop_outsideoption}
Let everything be as above and fix any $\bm\mu \in \{\ul \mu, \bar\mu\}^2$. 
\begin{enumerate}[(i)]
\item As we increase $\theta_o$ slightly, the probability that the optimal signal structure for $\bm\mu$ recommends the outside option weakly increases in every state, and the increase is strict if each action is recommended with a strictly positive probability on average. 

\item The attention allocation among the agents is qualitatively similar as in the baseline model. Specifically, the optimal signal structure is impartial among the agents when the latter exert the same level of effort. When  $m$ works and $w$ shirks, suppose that $m$ is promoted with a strictly positive probability on average. Then the optimal signal structure must be discriminatory among the agents, promoting $w$ more often than $m$ only if she is strictly more productive.  
\end{enumerate}
\end{propO}

While the presence of the outside option does not qualitatively affect the attention allocation among agents, it makes the principal's average decision probabilities much harder to solve --- a well known challenge in the rational inattention literature. It also complicates the agents' IC constraints, which now depend on four probability differences: $\pi_m(1,\theta_{w})-\pi_m(0,\theta_w)$, $\theta_w \in \{0,1\}$ and $\pi_w(\theta_m, 1)-\pi_w(\theta_m, 0)$, $\theta_m \in \{0,1\}$, rather than two.\footnote{The IC constraints of $m$ and $w$ now become: \[\mu_w[\pi_m(1,1)-\pi_w(0,1)]+(1-\mu_w)[\pi_m(1,0)-\pi_m(0,0)] \geq c\] \[\text{ and }\mu_m[\pi_w(1,1)-\pi_w(1,0)]+ (1-\mu_m)[\pi_w(0,1)-\pi_w(0,0)] \geq c, \]\text{ respectively}. \label{fn_outsideoption}}  These challenges make it impractical to gain further insights, such as ranking the profits across different equilibria.

To make progress, we solve the model numerically. Our sense is that, with an outside option, the main takeaways of our paper can remain valid.  For example, when $\theta_o=.2$, $\ul\mu=.65$, $\bar\mu=.9$, $C=.1$, and the grid length is $.001$, the high-effort impartial equilibrium exists if $\lambda\leq .202$, and the low-effort impartial equilibrium exists otherwise. The discriminatory equilibrium exists if $[.173, .339]$, and it is the most profitable equilibrium if $\lambda \in [.202, .339]$. 

\vspace{-10pt}



\paragraph{Alternative solution concepts.} So far we have restricted agents to playing pure strategies (while imposing no restriction on the principal's strategy space).  It turns out that generically, allowing mixed strategies on the part of agents makes it easier to sustain discrimination in equilibrium, in the following sense.

\begin{propO}\label{prop_mse}
For all $\lambda \neq \lambda^*$, any mixed strategy equilibrium  with nontrivial randomization must feature  exactly one agent strictly mixing between high and low effort, while the other agent makes a deterministic effort choice. Consequently, the equilibrium must be discriminatory. 
\end{propO}

While the above proposition indicates that any mixed strategy equilibrium of our model must be discriminatory, it does not address when such an equilibrium exists, or whether it is more profitable than the other equilibria of the model. These questions are challenging to tackle analytically for the following reason: now that agents can randomize between high effort and low effort, their mixing probabilities --- which, in equilibrium, depend on the principal's choice of the signal structure --- will influence the probability distribution of the  productivity states.
Since the optimal signal structure depends on the attention cost parameter (recall Lemma 1 of the main text), agents' mixing probabilities in equilibrium and the resulting distribution of the productivity states also depend on this parameter. This added layer of dependence significantly complicates the revenue and cost functions of the principal, making it difficult to predict their behavior as we vary the attention cost parameter. As shown in Lemma 5 of the main text, this step is crucial for comparing the profits generated by different equiliabria. It is relatively straightforward in the baseline model where the distribution of the productivity states is independent of the attention cost parameter, once the agents' pure effort choices are given.\footnote{As an illustration, consider a candidate equilibrium in which $m$ works for sure and $w$ works with probability $\sigma$. Under this mixed effort profile, the probabilities that the productivity state $\Delta \theta$ equals $1$ and $-1$ are given by $A =\bar\mu(1-\nu)$ and $B=\nu (1-\bar\mu)$, respectively, where $\nu \coloneqq \ul\mu+\sigma \Delta \mu$. The effort profile can be sustained in an equilibrium if the optimal signal structure --- given by $X=f(\gamma)$ and $Y=AX/B$ --- satisfies $w$'s indifference condition $\bar \mu X+(1-\bar\mu) Y=c$. Since the solution to this system of equations depends on $\gamma \coloneqq \exp(1/\lambda)$, $A$ and $B$ depend on $\gamma$ as well. Substituting $A$ and $B$ into Lemma 5 of the main text, we can see that the principal's revenue and cost functions now exhibit an additional layer of dependence on $\gamma$ through equilibrium $\nu$. Ranking the profits generated by this mixed strategy equilibrium and other equilibria of the model is analytically challenging.  \label{fn_mse}}

The same issue arises when we allow agents to play correlated strategies. While it is easy to solve for the impartial equilibria of the game under this solution concept,\footnote{Correlating the agents' effort choices has no impact on the set of impartial equilibria generically. To verify, let $a$ denote the probability that both agents are recommended to work, $2b$ the probability that exactly one agent is recommended to work, and $1-a-2b$ the probability that both agents are recommended to shirk. When these marginal probabilities are all positive, the obedience constraint associated with the recommendation to work is $\frac{a}{a+b}[(1-\bar\mu)X+\bar\mu Y] + \frac{b}{a+b}[(1-\ul\mu)X+\ul\mu Y] \geq c$, and the one associated with the recommendation to shirk is $\frac{b}{1-(a+b)}[(1-\bar\mu)X+\bar\mu Y]+\frac{1-a-2b}{1-a-b}[(1-\ul\mu)X+\ul\mu Y] \leq c$. These are weighted averages of the incentive compatibility constraints in the baseline model. Simplifying using the symmetry of the signal, i.e., $X=Y=g(\gamma)$, yields  $g(\gamma)=c$, or equivalently $\lambda=\lambda^*$. \label{fn_correlatedequilibrium}} it is much harder to solve for the discriminatory equilibria and to compare their profitability with the impartial equilibria, for reasons discussed above.  \label{refcorrms}

\section{Proofs}\label{sec_onlineproof}

\paragraph{Proof of Proposition \ref{prop_multi1}.} First, notice that for each task $t \in \{1,2\}$ and effort profile $\bm\mu^t \coloneqq (\mu_m^t,\mu_w^t)$, the principal's problem is the same as in the baseline model.  What is left is to verify that the joint signal structure $(\pi^1, \pi^2)$ satisfies the  agents' IC constraints. Compared to the baseline model, agents can now commit two-step deviations that revise their effort choices for both tasks, in addition to one-step deviations that revise their effort choices for a single task. However, since the problems they face are additive separable across tasks, it suffices to deter one-step deviations only. Given this, we can treat the multidimensional problem as two separate single-dimensional problems --- an approach we will follow in the remainder of the proof. 

\vspace{5pt}

\noindent Part (i): The optimal signal structure for $((\bar\mu, \bar\mu), (\bar\mu, \bar\mu))$ is incentive compatible if and only if $\lambda \leq \min\{\lambda^*(c^1), \lambda^*(c^2)\}$. Since $c^1 \leq c^2$ and $\lambda^*(\cdot)$ is decreasing in its argument, the last condition is equivalent to $\lambda \leq \lambda^*(c^2)$. Likewise, the optimal signal structure for $((\ul \mu, \ul\mu),(\ul\mu, \ul\mu))$ is incentive compatible if and only if $\lambda \geq \max\{\lambda^*(c^1), \lambda^*(c^2)\}=\lambda^*(c^1)$, and the optimal signal structure for $((\bar\mu, \bar\mu), (\ul\mu,
\ul\mu))$ is incentive compatible if and only if $\lambda \in [\lambda^*(c^2), \lambda^*(c^1)]$. The optimal signal structure for $((\ul\mu, \ul\mu), (\bar\mu, \bar\mu))$ isn't incentive compatible unless $c^1=c^2$. 

\vspace{5pt}

\noindent Part (ii): The optimal signal structure for $((\bar\mu, \ul\mu),(\ul\mu, \bar\mu))$ is incentive compatible if and only if $\lambda \in \displaystyle \cap_{t=1}^2[\ul\lambda(c^t), \bar\lambda(c^t)]$. Since $\ul\lambda(\cdot)$ and $\bar\lambda(\cdot)$ are decreasing in their arguments,  $\cap_{t=1}^2[\ul\lambda(c^t), \bar\lambda(c^t)]\neq \emptyset$ if and only if $\ul\lambda(c^1) \leq \bar\lambda(c^2)$.  When proving Theorem 1 of the main text, we established that $\lambda \geq \ul\lambda(c)$ if and only if 
\[X \leq \bar{X}(c)=\frac{c\ul\mu}{\bar\mu},\]
and that $\lambda \geq \bar\lambda(c)$ if and only if 
\[X \geq \ul{X}(c)=\frac{c(1-\bar\mu)}{1-\ul\mu}. \]
Thus $\ul\lambda(c^1) \leq \bar\lambda(c^2)$ if and only if $\bar{X}(c^1) \geq \ul{X}(c^2)$, which, after simplifying, becomes:
\[\frac{c^1}{c^2} \geq \frac{\bar\mu(1-\bar\mu)}{\ul\mu(1-\ul\mu)}.   \] 

As a concluding remark, note that the method developed above also applies to situations in which agents undertake the same level of investment in one task but different levels of investment in the other task. For example, the optimal signal structure for $((\bar\mu, \bar\mu), (\bar\mu, \ul\mu))$ is incentive compatible if and only if $\lambda \leq \lambda^*(c^1)$ and $\lambda \in [\ul\lambda(c^2), \bar\lambda(c^2)]$. To save space, we choose not to exhaust all possibilities  and focus intead on specialized and non-specialized equilibria only. \qed
\vspace{-10pt}

\paragraph{Proof of Proposition \ref{prop_multi2}.} 
Parts (i) and (ii) of this proposition are immediate from Theorem 2 of the main text. To show Part (iii), we follow the notational convention developed in Lemma 5 of the main text. Specifically, we use $V(\bm{\mu};\gamma)$ and $I(\bm{\mu};\gamma)$ to denote the expected revenue and the mutual information cost generated by the optimal signal structure for $\bm{\mu}$, respectively. Write $\Delta V^1(\gamma)$ for $V((\bar\mu, \bar\mu); \gamma)-V((\bar\mu, \ul\mu); \gamma)$, $\Delta V^2(\gamma)$ for $V((\bar\mu, \ul\mu); \gamma)-V((\ul\mu, \ul\mu); \gamma)$, $\Delta I^1(\gamma)$ for $I((\bar\mu, \bar\mu); \gamma)-I((\bar\mu, \ul\mu);\gamma)$, and $\Delta I^2(\gamma)$ for $I((\bar\mu, \ul\mu);\gamma)-I((\ul\mu, \ul\mu);\gamma)$. 

We wish to show that 
\[\Delta V^1(\gamma)-\frac{1}{\ln\gamma} \Delta I^1(\gamma)-[\Delta V^2(\gamma)-\frac{1}{\ln\gamma}\Delta I^2(\gamma)]<0\quad \forall \gamma \in [\breve\gamma, +\infty).\]
Below we prove a stronger claim, namely $\Delta V^1(\gamma)<\Delta V^2(\gamma)$ and $\Delta I^1(\gamma)>\Delta V^2(\gamma)$ $\forall \gamma \in [\breve\gamma, +\infty)$. 

To show that $\Delta V^1(\gamma)<\Delta V^2(\gamma)$, recall from Lemma 5 of the main text that 
\[\Delta V^1(\gamma)=\frac{\Delta\mu}{\gamma+1}[\gamma-(\gamma-1)\bar\mu] \text{ and } \Delta V^2(\gamma)=\frac{\Delta\mu}{\gamma+1}[\gamma-(\gamma-1)\ul\mu].\]
Thus, 
\[\Delta V^1(\gamma)-\Delta V^2(\gamma)=-\frac{(\gamma-1)(\Delta\mu)^2}{\gamma+1}<0\]
as desired. 

To show that $\Delta I^1(\gamma)>\Delta I^2(\gamma)$ $\forall \gamma \in [\breve{\gamma}, +\infty)$, notice that the claim is clearly true at $\gamma=\breve\gamma$, since $\Delta I^1(\breve\gamma)>0$ and $\Delta I^2(\breve\gamma)<0$. It is also true when $\gamma$ is very large, since 
\begin{align*}
&\lim_{\gamma \rightarrow +\infty} \Delta I^1(\gamma)-\Delta I^2(\gamma)\\
&= 2 [\bar\mu(1-\bar\mu)+\ul\mu(1-\ul\mu)]\ln 2-2[A\ln\left(\frac{A+B}{A}\right)+B\ln\left(\frac{A+B}{B}\right)]\\
\tag{Verify using Mathematica}&>0.
\end{align*}
Then from 
\begin{align*}
&\frac{d}{d\gamma} \Delta I^1(\gamma)-\Delta I^2(\gamma)\\
\tag{$\because$ Lemma 5}&=\frac{\Delta \mu (1-2\bar\mu)\ln \gamma}{(\gamma+1)^2}-\frac{\Delta \mu (1-2\ul\mu)\ln \gamma}{(\gamma+1)^2}\\
&=-\frac{2(\Delta\mu)^2\ln \gamma}{(\gamma+1)^2}<0,
\end{align*}
it follows that $\Delta I^1(\gamma)-\Delta I^2(\gamma)$ is everywhere positive on $[\breve\gamma, +\infty)$ as desired.  \qed

\vspace{-10pt}
\paragraph{Proof of Proposition \ref{prop_priorinvariant}.} We first solve for the optimal signal structure obtained under $K$. To this end, we fix a profile $\bm\mu \in \{\ul\mu, \bar\mu\}^2$ of effort choices and use $p$ to denote the true prior distribution of the productivity state under $\bm\mu$. Addtionally, let $\bar \pi_q$ denote the average probability that a signal structure $\pi$ recommends $m$ for promotion under the reference prior $q$.  As shown by \cite{denti2022experimental}, the optimal signal structure for $\bm{\mu}$ is fully pinned down by (i) an augmented version of the multinomial logit formula:
\[\pi(\Delta\theta)=\frac{\bar\pi_q\exp\left(\frac{\Delta\theta}{\lambda}\frac{p(\Delta\theta)}{q(\Delta\theta)}\right)}{\bar\pi_q\exp\left(\frac{\Delta\theta}{\lambda}\frac{p(\Delta\theta)}{q(\Delta\theta)}\right)+1-\bar\pi_q} \text{ }\forall \Delta \theta \in \{-1,0,1\},\]
and (ii) Bayes's plausibility under the reference prior: 
 \[ \sum_{\Delta\theta \in \{-1,0,1\}} q(\Delta\theta)\pi(\Delta\theta)=\bar\pi_q.\]
Solving these equations simultaneously yields: 
 \[\bar\pi_q=\frac{(\alpha-1)\beta q(1)-(\beta-1)q(-1)}{(\alpha-1)(\beta-1)[q(1)+q(-1)]},\]
where
 \[\alpha \coloneqq \exp\left(\frac{p(1)}{\lambda q(1)}\right) \text{ and } \beta \coloneqq \exp\left(\frac{p(-1)}{\lambda q(-1)}\right).\]

What matter for agents' incentives are:  $X \coloneqq \pi(1)-\pi(0)$ and $Y \coloneqq \pi(0)-\pi(-1)$, calculated as follows: 
\[X=\frac{\alpha \bar\pi_q}{\alpha \bar\pi_q+1-\bar\pi_q}-\bar\pi_q=\frac{(\alpha-1)\bar\pi_q(1-\bar\pi_q)}{\alpha \pi_q+1-\bar\pi_q}\] and 
\[Y=\bar\pi_q-\frac{\bar\pi_q}{\bar\pi_q+(1-\bar\pi_q)\beta}=\frac{(\beta-1)\bar\pi_q(1-\bar\pi_q)}{\pi_q+\beta(1-\bar\pi_q)}.\]
Dividing $X$ by $Y$ yields: 
\[\frac{X}{Y}=\frac{\alpha-1}{\beta-1}\frac{\bar\pi_q+\beta(1-\bar\pi_q)}{\alpha \pi_q+1-\bar\pi_q}=\frac{\alpha-1}{\beta-1}\frac{\frac{\beta-1+(\alpha-1)\beta}{2(\alpha-1)}}{\frac{(\alpha-1)\beta+\beta-1}{2(\beta-1)}}=1,\]
where the second equality exploits the expression for $\bar\pi_q$. Substituting this result into the agents' IC  constraints shows that either both agents prefer to work rather than shirk, or they both prefer to shirk than to work. The discrimininatory effort profile, on the other hand, cannot be sustained in any equilibrium. 

To fully characterize the equilibrium regime, note that $\alpha=\beta=\exp(p(1)/\lambda q(1))$ if the underlying effort profile is impartial. Simplifying the expressions for $X$ and $Y$ accordingly and substituting the result into the agents' IC constraints, we obtain that $(\bar\mu, \bar\mu)$ can arise in an equilibrium if $\lambda \leq \left(\ln\left(\frac{q(1)g^{-1}(c)}{\bar\mu(1-\bar\mu)}\right)\right)^{-1}$, and that $(\ul\mu, \ul\mu)$ can arise in an equilibrium if $\lambda \geq \left(\ln\left(\frac{q(1)g^{-1}(c)}{\ul\mu(1-\ul\mu)}\right)\right)^{-1}$. The two regimes are disjoint under Assumption 1(i) of the main text. \qed

\vspace{-10pt}

\paragraph{Proof of Proposition \ref{prop_fixedcost}.} For each $i \in \{m,w\}$, write $\pi_i$ for the probability that agent $i$ is monitored, and $\mu_i$ for the probability that agent $i$ has a high productivity value, given his or her (random) effort choice. In Appendix \ref{sec_cost}, we already demonstrated that it is never optimal for the principal to monitor both agents. Instead, the principal should (randomly) monitor one agent and then promote the monitored agent if his productivity is high, or promote the remaining agent if the monitored agent's productivity is low. The resulting gain from monitoring agent $i$ is $\mu_i+(1-\mu_i)\mu_{-i}$, which is independent of $i$. Consequently, the principal must be indifferent between monitoring either $m$ or $w$. Finally, monitoring is preferable to not monitoring if the above quantity is greater $\max_i \mu_i$ --- the gain from promoting the ex-ante more productive agent --- plus the cost of monitoring $\kappa$. 

Turning to agents' incentives, we note that agent $i$ can increase his probability of winning the promotion by   $\pi_i\Delta \mu$ if he chooses to work rather than shirk. Working is preferable to shirking if $\pi_i \Delta \mu \geq C$, or, equivalently, $\pi_i \geq c$. 
 
Intersecting players' best response functions provides a full characterization of  equilibrium regimes. To ensure a fair comparison with the baseline model, we assume that the production technology satisfies Assumption 1 of the main text. As noted earlier, this assumption implies that $c<1/2$ and $\bar\mu(1-\bar\mu)<\ul\mu(1-\ul\mu)$. 

Consider first impartial equilibria, which always exist and is generically unique. When $\kappa > \ul \mu (1-\ul\mu)$, the impartial equilibrium induces both agents to shirk; the principal monitors no agents and earns an expected profit of $\ul\mu$. When $\kappa \in (\bar\mu(1-\bar\mu), \ul\mu(1-\ul \mu))$, the equilibrium induces both agents to work with equal probability such that $\mu(1-\mu)=\kappa$; the principal is indifferent between monitoring and no monitoring, resulting in an expected profit of $\mu$; he monitors each agent with probability $c$, making them both indifferent between working and shirking. Finally, when $\kappa<\bar\mu (1-\bar\mu)$, the equilibrium induces both agents to work; the principal monitors each agent with probability $1/2$, and his expected profit equals $\bar\mu+(1-\bar\mu)\bar\mu-\kappa$.

Second, a discriminatory equilibrium in which $\mu_m>\mu_w$ exists (but is not necessarily unique) when $\kappa < \ul \mu (1-\ul\mu)$. When $\kappa \in (\ul\mu (1-\bar\mu), \ul\mu (1-\ul \mu))$, there exists a discriminatory equilibrium in which $\mu_w=\ul\mu$ and $\mu_m$ solves $(1-\mu_m)\ul \mu=\kappa$; the principal is indifferent between monitoring and no monitoring, resulting in an expected profit of $\mu_m$; he monitoring $m$ with probability $\pi_m=c$ and $w$ with probability $\pi_w<c$, which in turn makes $m$ indifferent between working and shirking while $w$ strictly prefers to shirk. When $\kappa \in (\ul\mu(1-\bar\mu), \bar\mu(1-\bar\mu))$, there exists another discriminatory equilibrium in which $\mu_m=\bar\mu$ and $\mu_w$ solves $(1-\bar\mu)\ul \mu_w=\kappa$; the principal is indifferent between monitoring and no monitoring, resulting in an expected profit of $\bar\mu$; he monitors $m$ with probability $\pi_m>c$ and $w$ with probability $\pi_w=c$, which in turn makes $m$ work and $w$ indifferent between working and shirking. Finally, when $\kappa<\ul\mu (1-\bar\mu)$, there exists a discriminatory equilibrium in which $m$ works, $w$ shirks, and the principal monitors only $m$; all players' incentives are strict, and the principal's expected profit equals $\bar\mu+(1-\bar\mu)\ul\mu-\kappa$.

By intersecting these equilibrium regimes, we can conclude that the principal's most preferred equilibrium is never discriminatory, thus completing the proof. \qed 

\vspace{-10pt}

\paragraph{Proof of Proposition \ref{prop_commitment}.} Parts (i) and (iii) of this proposition have already been established in Appendix \ref{sec_commitment}.  To show Part (ii),  notice that when $\lambda \in (\lambda^*,  \bar\lambda]$, the principal can induce $(\bar\mu, \ul \mu)$ and $(\ul\mu,\ul\mu)$ the same way as in the main text,  and she prefers the first outcome to the second one.  The only way to do better is to induce $(\bar\mu, \bar\mu)$ using a signal structure that makes one agent's IC constraint binding and the other agent's IC constraint slack.

Suppose w.l.o.g.  that it is $m$ whose IC constraint is binding and $w$ whose IC constraint is slack.  In that case, the principal's problem can be formalized as follows: 
\begin{align*}
\max_{\pi:\{-1,0,1\} \rightarrow [0,1]} &\sum_{\Delta\theta \in \{-1,0,1\}} p(\Delta \theta)\pi(\Delta\theta)\Delta\theta +\bar\mu- \lambda I(\pi \mid p) \\
\tag{IC$_m$}\text{s.t. }  & (1-\bar\mu) [\pi(1)-\pi(0)]+\bar\mu [\pi(0)-\pi(-1)] \geq c,
\end{align*}
where the term $p(\Delta\theta)$ in the objective function denotes the probability that $\Delta\theta$ occurs under $(\bar\mu,\bar\mu)$, and $I(\pi \mid p)$ denotes the mutual information cost when the underlying states follow distribution $p$.  Since the objective function is concave in $\pi$ and the constraint is linear in $\pi$,  strong duality holds. Let $\nu_m>0$ denote the Lagrange multiplier associated with the constraint, and rewrite the principal's problem as: 
\[\max_{\pi:\{-1,0,1\} \rightarrow [0,1]} \begin{Bmatrix}p(1)\pi(1)[1+\frac{\nu_m(1-\bar\mu)}{p(1)}] 
+p(0)\pi(0)[0+\frac{\nu_m (2\bar\mu-1)}{p(0)}]\\
+p(-1)\pi(-1)[-1-\frac{\nu_m \bar\mu}{p(-1)}]
\end{Bmatrix}-\lambda I(\pi \mid p).\] 
Since this is a binary RI decision problem with the principal's payoffs in the various states being: 
 \[v(1)=1+\frac{\nu_m(1-\bar\mu)}{p(1)},v(0)=\frac{\nu_m (2\bar\mu-1)}{p(0)}, \text{ and } v(-1)=-1-\frac{\nu_m \bar\mu}{p(-1)},\]
 the solution must satisfy the multinomial logit formula: 
\[\pi(\Delta \theta)=\frac{\bar\pi \exp\left(v(\Delta\theta)/\lambda\right)}{\bar\pi \exp\left(v(\Delta\theta)/\lambda\right)+1-\bar\pi} \text{ } \forall \Delta\theta \in \{-1,0,1\}.\]
In the case where the solution is impartial, we must have $\bar\pi=1/2$,  as well as 
 \[\pi(0)=\frac{\bar\pi \exp\left(v(0)/\lambda\right)}{\bar\pi \exp\left(v(0)/\lambda\right)+1-\bar\pi}>1/2,\]
where the inequality uses the fact that $\bar\mu>1/2$ and $\nu_m>0$.  Since impartiality requires that $\pi(0)=1/2$, we have reached a contradiction, hence the solution must be discriminatory, as desired. \qed

\vspace{-10pt}
\paragraph{Proof of Proposition \ref{prop_outsideoption}.} Part (i) of this proposition is immediate from Proposition 3 of \cite{matvejka2015rational}. 

To prove Part (ii), we fix any $\bm\mu \in\{\ul\mu, \bar\mu\}^2$ and write $\bar\pi_z$ for the average probability that the optimal signal structure for $\bm\mu$ recommends action $z \in \{m,w,o\}$. By the multinomial logit formua of \cite{matvejka2015rational}, we have: 
\[\pi_z(\bm\theta)=\frac{\bar\pi_{z} \exp(\theta_z/\lambda)}{\sum_{z' \in \{m,w,o\}} \bar\pi_{z'} \exp(\theta_{z'}/\lambda)} \text{ } \forall z \in \{m,w,o\} \text{ and } \bm\theta \in \{0,1\}^2. \]
When $\mu_m=\mu_w$, $\bar\pi_m=\bar\pi_w$ clearly holds. Substituting this into the above formula yields $\pi_m(\theta,\theta')=\pi_w(\theta',\theta)$ $\forall (\theta,\theta')$, hence the optimal signal is impartial among the agents. When $\mu_m>\mu_w$ and $\bar\pi_m>0$, $\bar\pi_m > \bar\pi_w$ clearly holds. If $\bar\pi_w=0$, then $\pi_w(\bm\theta)=0$ while  $\pi_m(\bm\theta)>0$ $\forall \bm\theta$. Consequently, the optimal signal structure is discriminatory among the agents. If, instead,  $\bar\pi_w>0$, then 
\[\frac{\pi_m(\bm\theta)}{\pi_w(\bm\theta)}=\exp\left(\frac{\theta_m-\theta_w}{\lambda}\right)\frac{\bar\pi_m}{\bar\pi_w} \text{ } \forall \bm\theta \in \{0,1\}^2. \]
The optimal signal structure must be discriminatory, as $\pi_m(\bm\theta)<\pi_w(\bm\theta)$ only if $\bm\theta=(0,1)$, and the inequality is flipped for any other $\bm\theta$.  \qed 

\vspace{-10pt}

\paragraph{Proof of Proposition \ref{prop_mse}.} When mixed strategies are allowed, let $\sigma_i \in [0,1]$ denote the probability that agent $i \in \{m,w\}$ exerts high effort, and assume w.l.o.g. that $\sigma_m \geq \sigma_w$.  Write $\nu(\sigma_i)$ for the probability $\ul\mu+\sigma_i \Delta \mu$  that  agent $i$ has a high productivity value. Then 
$A \coloneqq \nu(\sigma_m)[1-\nu(\sigma_w)]$ is the probability that the productivity state $\Delta \theta=1$  and $B \coloneqq \nu(\sigma_w)[1-\nu(\sigma_m)]$ is the probability that $\Delta\theta=-1$. 
Substituting $A$ and $B$ into Lemma 1 of the main text, we obtain the principal's optimal signal structure for any given profile the agents' (mixed) strategies, which must satisfy
\begin{equation}\label{eqn_signal}
X=\frac{(\gamma A-B)(\gamma B-A)}{(\gamma^2-1)(A+B)A} \text{ and } Y=\frac{A}{B}X. 
\end{equation}
if it is nondegenerate (if the signal structure is degenerate, then the equilibrium must discriminatory, and we are done). Meanwhile,  agents must be indifferent if they strictly mix between high effort and low effort. For $m$, this happens when
\begin{equation}\label{eqn_mindiff}
[1-\nu(\sigma_w)]X+\nu(\sigma_w)Y=c.
\end{equation}
For $w$, this happens when 
\begin{equation}\label{eqn_windiff}
\nu(\sigma_m)X+[1-\nu(\sigma_m)]Y=c.
\end{equation}

In case both agents strictly mix, solving (\ref{eqn_mindiff}) and (\ref{eqn_windiff}) simultaneously  yields either $X=Y=c$,  or $\nu(\sigma_m)+\nu(\sigma_w)=1$, or both.  In all scenarios, we must have $A=B$ and $X=Y$. Substituting this into (\ref{eqn_signal}) and solving yields $g(\gamma)=c$, or, equivalently, $\lambda=\lambda^*$. 

When $\lambda \neq \lambda^*$, (\ref{eqn_mindiff}) and (\ref{eqn_windiff}) cannot hold simultaneously. Consequently, any mixed strategy equilibrium with nontrivial randomization must have exactly one agent randomizing while the other agent making a deterministic effort choice. Such an equilibrium must be discriminatory. \qed

 \section{Figures}\label{sec_figure}
 
 \begin{figure}[!h]
\centering
     \includegraphics[width=.9\linewidth]{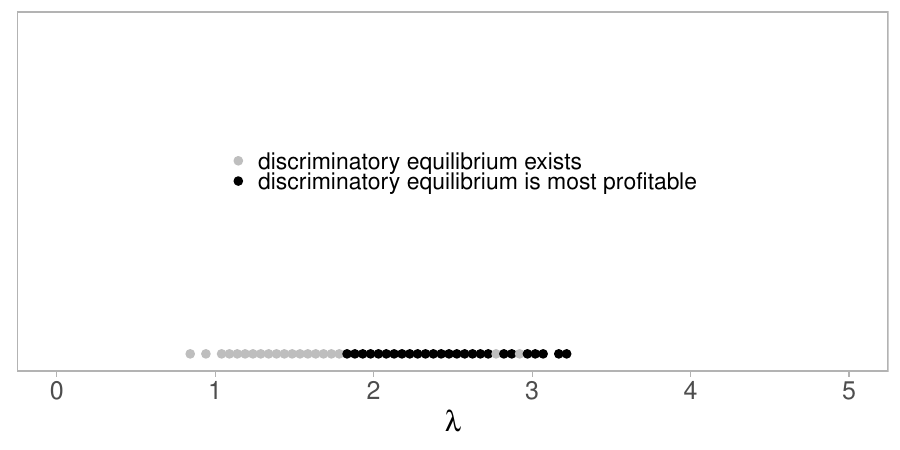}
    \caption{Equilibrium regime under the total information cost: $\underline{\mu}=0.55$, $\bar\mu=.65$, $C=.03$,  $\lambda$ ranges from $.1$ to $5$, and \# of grids is 100. An impartial equilibrium always exists; it sustains the high effort profile if $\mu<1.67$ and the low effort profile if $\mu>1.67$.  }\label{figure_totalinfo}
	\end{figure} 

\begin{spacing}{.8}
\bibliographystyle{aer} 
\bibliography{Discrimination.bib}
\end{spacing}

\end{document}